\newtheorem{definition}{Definition}
\newtheorem{theorem}{Theorem}
\newtheorem{lemma}[theorem]{Lemma}
\newtheorem{corollary}[theorem]{Corollary}
\newcommand{\remove}[1]{}
\newcommand{\etal}{\emph{et al.}\xspace}
\newcommand{\eg}{\emph{e.g.,}\xspace}
\newcommand{\ie}{\emph{i.e.,}\xspace}
\newcounter{claimcounter}
\newcounter{lemmacounter}
\newcounter{theoremcounter}
\renewenvironment{proof}
    {\par\noindent\textit{Proof.} }
    {\hfill$\square$\par}
\date{April 2025}
\title{Self-Stabilizing Replicated State Machine  Coping with Byzantine and Recurring Transient Faults}
\date{}
\author{
   Shlomi Dolev\footnote{Ben-Gurion University of the Negev, Israel. Email: dolev@bgu.ac.il} \quad 
   Amit Hendin\footnote{Ben-Gurion University of the Negev, Israel. Email: hendina@post.bgu.ac.il} 
   \quad
   Maurice Herlihy\footnote{Brown University, USA. Email:  mph@cs.brown.edu} \quad
   Maria Potop Butucaru\footnote{Sorbonne Université, LIP6, Paris, France. Email: maria.potop-butucaru@lip6.fr} \quad
   Elad Michael Schiller\footnote{Chalmers University of Technology, Sweden. Email: elad.schiller@chalmers.se }
}
\begin{document}
\maketitle





\begin{abstract}
The ability to perform repeated Byzantine agreement lies at the heart of
important applications such as blockchain price oracles or replicated
state machines. Any such protocol requires the following properties:
(1) \textit{Byzantine fault-tolerance}, because not all participants can be
assumed to be honest, (2) r\textit{ecurrent transient fault-tolerance}, because
even honest participants may be subject to transient ``glitches'', (3)
\textit{accuracy}, because the results of quantitative queries (such as price
quotes) must lie within  the interval of honest participants' inputs, and (4)
\textit{self-stabilization}, because it is infeasible to reboot a distributed system following a fault.

This paper presents the first protocol for repeated Byzantine agreement
that satisfies the properties listed above.
Specifically, starting in an arbitrary system configuration, our protocol establishes consistency. It preserves consistency in the face of up to $\lceil n/3 \rceil -1$ Byzantine participants {\em and} constant recurring (``noise'') transient faults, of up to $\lceil n/6 \rceil-1$ additional malicious transient faults, or even more than $\lceil n/6 \rceil-1$ (uniformly distributed) random transient faults, in each repeated Byzantine agreement. 
\end{abstract}


\section{Introduction}
In finance, a \emph{price oracle} (or just \textit{oracle}) is a service that connects users (often smart contracts on blockchains) with timely real-world data.
Examples might include the spot price of Bitcoin, the current dollar/euro exchange rate, or nightly temperatures in Florida citrus groves. Today, oracle services are a growing business sector,
led by companies such as Chainlink~\cite{chainlink},
Pyth Network~\cite{pyth}, and others~\cite{oraclesurvey}.

Much hinges on price oracle accuracy.
Oracle price data drives automated trading algorithms,
and unexpected fluctuations may trigger ruinous margin calls or liquidations.
To avoid costly errors or fraud, commercial oracle services typically employ a variety of ad-hoc techniques, typically aggregating and smoothing data from multiple sources~\cite{oraclesurvey}.

This paper introduces a class of Byzantine agreement algorithms
well-suited to provide a systematic,
rigorous foundation for \emph{decentralized} price oracles
immune to tampering from a small coalition of participants.
A decentralized price oracle consists of a set of $n$ processes, each connected to a distinct data source, such as a stock market feed or a sensor.
A user query causes the processes to reach consensus on an aggregate price to be delivered to that user.
Here is a list of (informally stated) properties a consensus protocol for a price oracle
should satisfy.

\begin{itemize}
\item
  \textbf{Byzantine Fault-Tolerance}:
  Because the stakes are high, one cannot assume all participants are honest.
  Our protocol tolerates $\lceil n/3 \rceil - 1$ Byzantine
  (permanently dishonest) processes.
\item
  \textbf{Recurrent Transient Fault-Tolerance}:
  Even honest participants may be subject to recurring \emph{transient} errors, where an otherwise honest participant periodically undergoes an incorrect state change, but then resumes honest behavior.
  Our protocol tolerates $\lceil n/6 \rceil -1$ recurring intermittent transient errors.
\item
  \textbf{Price Accuracy}:
  As conditions change,
  honest participants may receive different prices from their sources,
  while dishonest participants may claim arbitrary prices.
  Even in the face of differing inputs and disruptive participants,
  it is not acceptable to deliver a null $\bot$ price to a user.
  Instead,  our protocol guarantees that the price delivered to the user
  lies within the \emph{range} of prices proposed by honest participants.
  (Under some circumstances,
  the protocol can be adapted to return specific values,
  such as the internal mode or median.)
\item
  \textbf{Self-Stabilization}:
  When something goes wrong, it is not feasible to restart a decentralized service.
  An oracle state is \emph{legitimate} if it could have
  been produced by a failure-free execution.
  Our protocol is \emph{self-stabilizing}: starting from any (possibly illegitimate) state, the system will eventually enter a legitimate state,
  and remain in legitimate states
  until the next transient failure, see \cite{DBLP:books/mit/Dolev2000}.
\end{itemize}
These properties are useful for applications beyond price oracles.
For example, a \emph{replicated state machine} is a fundamental task in distributed computing in which multiple processes coordinate to act as a single process (state machine).
Devising replicated state machine protocols that work in the face of imperfect communication and Byzantine participants is a classic problem of distributed
computing~\cite{DBLP:journals/tocs/JhaBGMSTRZB19,nakamoto2008bitcoin}. 

 \noindent
{\bf Related work.} 
 Prior work on Byzantine agreement has addressed tolerating Byzantine participants and faults,
 \eg \cite{DBLP:journals/jacm/DolevW04, DBLP:journals/tdsc/BrownsteinDK22}. 
 By contrast, our work here investigates the effects of \textit{repeated} transient faults in the scope of \textit{repeated} consensus to realize a replicated state machine. 
Our protocol has more flexibility in choosing decision values than protocols restricted to approximate agreement \cite{DBLP:journals/jacm/DolevLPSW86},  median \cite{DBLP:conf/opodis/StolzW15}, or interval agreement 
\cite{DBLP:conf/srds/MelnykW18}. 

Stolz and Wattenhofer~\cite{DBLP:conf/opodis/StolzW15} propose a Byzantine agreement protocol that satisfies median validity, meaning that the non-faulty processes decide on a value that is at most $t$ places away from the median value of the non-faulty nodes, where $t$ is the upper limit of the number of Byzantine processes. Melnyk and Wattenhofer~\cite{DBLP:conf/srds/MelnykW18} propose a Byzantine agreement protocol that satisfies interval validity, which requires the non-faulty processes to decide values close to the $k^{th}$ smallest initial value of the non-faulty processes. While both protocols take the median of a (possibly not identical) vector of the inputs structured by each process, here we ensure that each non-Byzantine process structures an {\em identical} vector of the inputs. While these two works aim to approximate almost equal (but possibly different) values decided by non-Byzantine participants, our approach obtains consistency on an entirely agreed-upon vector.
All non-Byzantine participants agree upon each vector entry before taking the median. In order to obtain such an identical vector, another Byzantine agreement algorithm, that may return $\bot$, is employed; we chose a classical one presented in \cite{TURPIN198473}.  

Binum et. al. \cite{DBLP:conf/sss/BinunCDKLPYY16} presented a self-stabilizing Byzantine replicated machine; however, they do not address the tolerance of recurring transient faults nor ensure strong validity, namely, the usage of the preceding state replica under one-third Byzantine and one-sixth recurring transient faults. Several approaches were suggested to seamlessly cope with transient faults, including superstabilization for tolerating the dynamicity of the communication graph \cite{DBLP:journals/cjtcs/DolevH97}, fault containment, and local stabilization for coping with a close-by corruption of the state of the participants, 
\cite{DBLP:journals/dc/GhoshGHP07, DBLP:journals/jpdc/AfekD02}. 
Here, we introduce the possibility of tolerating Byzantine and recurrent (arbitrary and randomized) transient faults while ensuring a replicated state machine with strong validity. We ensure that the replicated state agreed upon by the non-Byzantine participants is used when computing the next replicated state, despite recurring transient faults that corrupt the replicas maintained by a bounded number of them. 

Several prior works address the combination of Byzantine fault tolerance and self-stabilization in a replicated state machine protocol. 
 Dolev \emph{et al.}~\cite{dolev2018self} propose a self-stabilizing Byzantine-tolerant replicated state machine architecture, relying on failure detectors to manage faulty behavior. While their design also targets recovery and Byzantine resilience, it assumes transient faults are rare and isolated. In contrast, our solution explicitly tolerates recurring transient faults that may continuously affect system execution.
Georgiou \ et al.~\cite {DBLP:conf/netys/GeorgiouMRS21} propose a loosely self-stabilizing binary Byzantine consensus protocol for asynchronous message-passing systems without relying on digital signatures. Their work focuses on binary input domains, whereas our approach targets multivalued consensus with stronger interval validity guarantees under recurring transient faults.
Browenstein \etal \cite{DBLP:journals/tdsc/BrownsteinDK22} presented a self-stabilizing Byzantine replicated state machine that preserves the global state privacy, based on secure multi-party computation. 

In the scope of Blockchain, Dolev \ et al.~\cite {DBLP:journals/iandc/DolevL22} present techniques for using wallet information to reconstruct corrupted (or even erased) Blockchain records.
Georgiou \etal~\cite{DBLP:conf/sss/GeorgiouRS23} present a self-stabilizing, Byzantine-tolerant framework for recycling single-shot consensus objects in long-lived computations, allowing the reuse of object identifiers and other finite resources. 
Recently, Duvignau \etal~\cite{duvignau2023self,DBLP:journals/tcs/DuvignauRS25} present self-stabilizing versions of reliable broadcast and multivalued consensus protocols designed for Byzantine environments. Like our work, their approach addresses fault tolerance from arbitrary initial states and supports recovery in the presence of transient faults. However, it does not account for the recurring nature of such faults, which is a central focus of our system settings.

While previous works focus on resource reuse and managing global variables under adversarial conditions, our work targets agreement and state machine replication with recurring transient fault containment and strong validity on the replicated state guarantees.

\noindent
{\bf Paper roadmap.} System settings appear in the next section, Section \ref{sec:sys}, the requirements of the Byzantine agreement and the replicated state machine appear in Section \ref{sec:prob}. Madian-based multi-valued Byzantine agreement is presented in Section \ref{sec:med}, then our replicated state machine appears in Section \ref{sec:btfr}. Lastly, conclusions appear in Section \ref{sec:conc}. Some details appear in the Appendix, and some are omitted from this extended abstract.

\section{System Settings}
\label{sec:sys}

The Byzantine Agreement problem was first defined in \cite{DBLP:journals/jacm/PeaseSL80}. The agreement task involves a set of processes that attempt to agree on a value by executing a common protocol during which they communicate. In Byzantine agreement, some processes act maliciously, sending the worst possible sequence of (inconsistent) messages. The Byzantine participants can be two-faced, sending or omitting conflicting messages to different participants or fainting crashes.

We assume a fully synchronous system model, dividing time into discrete, globally synchronized rounds.
A global pulse, implemented via an external timing mechanism, defines the start of each round.
All non-faulty processes send their messages simultaneously at the beginning of the round, and all messages are delivered before the next round begins.

The \emph{processes} $p_1, \dots ,p_n$ are a set of processes that execute the same algorithm.
Each process can send messages to any other process. 
In each round, a process may send different messages to different recipients. 
Processes proceed according to global rounds. 
The communication graph is a clique: each process has a unique identifier and can communicate directly with all other processes. 
The sender’s identity is known upon receipt, either encoded in the message or inferred from the communication channel.

The system state (also called a \emph{configuration}) at any time consists of the vector of the local states of all processes.
Due to synchronous delivery, message buffers are assumed empty at round boundaries.
We define a system execution as a sequence of global configurations resulting from atomic computation steps executed in rounds.
Each round triggered by a global pulse consists of four phases: 
(i) an \emph{input phase}, in which every process receives an external input value simultaneously; 
(ii) a \emph{send phase}, where each process broadcasts messages; 
(iii) a \emph{receive phase}, where all messages are delivered; and 
(iv) a \emph{computation phase}, where each process updates its local state based on received messages and input.

\begin{figure}[t!]
\centering
\begin{tikzpicture}[font=\small, every node/.style={align=center}, layer/.style={draw, minimum width=6cm, minimum height=1.2cm, fill=gray!10}, arrow/.style={->, thick}]

\node[layer] (layer1) at (0,0) {Synchronous\\Message-Passing System};
\node[layer, above=0.8cm of layer1] (layer2) {Multi-valued Byzantine Agreement};
\node[layer, above=0.8cm of layer2] (layer3) {State Machine Replication};

\draw[arrow] ([xshift=-2cm]layer2.south) -- node[left] {send} ([xshift=-2cm]layer1.north);
\draw[arrow] ([xshift=2cm]layer1.north) -- node[right] {receive} ([xshift=2cm]layer2.south);

\draw[arrow] ([xshift=-2cm]layer3.south) -- node[left] {input} ([xshift=-2cm]layer2.north);
\draw[arrow] ([xshift=2cm]layer2.north) -- node[right] {output} ([xshift=2cm]layer3.south);

\end{tikzpicture}
\caption{\label{fig:arch}Layered architecture showing interactions between the system components.}
\end{figure}
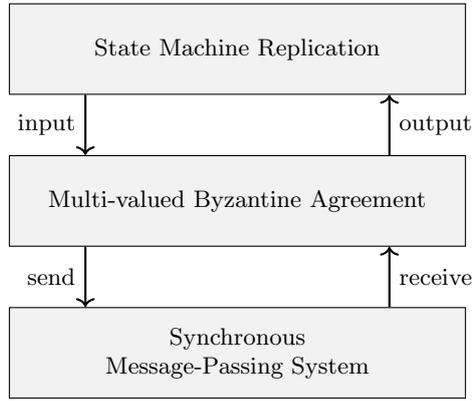

We consider \emph{Byzantine faults}~\cite{DBLP:journals/jacm/PeaseSL80}, where faulty processes may arbitrarily deviate from the algorithm and act maliciously.
To enable system recovery and continuous functionality of the system~\cite{DBLP:journals/jacm/PeaseSL80,DBLP:conf/stoc/Ben-OrGW88}, the number of Byzantine processes is bounded throughout execution by $t < \lfloor n/3 \rfloor$.

We also account for \emph{transient faults}, as introduced in Dijkstra's seminal work on self-stabilization~\cite{DBLP:journals/cacm/Dijkstra74}.
These faults occur only before the start of the execution, possibly due to temporary violations of fault thresholds, and may arbitrarily corrupt the initial system configuration.
From that point on, the fault thresholds, \eg number of Byzantine processes, are assumed to remain within tolerated bounds, and no transient fault can arbitrarily change the system state.

In contrast, we define \emph{recurring transient faults} (also called \emph{online transient faults}) as arbitrary local state corruptions that may occur throughout execution.
These affect at most one process per corruption event and are assumed to occur with bounded intensity.
Specifically, the number of such corruptions in any given round is strictly fewer than $r < \lceil n/6 \rceil-1$.

The bound on $r$ ensures that the median used in our agreement algorithms remains within the range of correct values.
Since up to $\lfloor n/3 \rfloor$ processes may be Byzantine and some correct values may be lost or unavailable, allowing more than $\lfloor n/6 \rfloor-1$ recurring corruptions could cause the median to reflect faulty or adversarial values.
This threshold ensures that a sufficient number of uncorrupted, correct inputs influence the outcome.

Due to the presence of transient faults, the system may begin in an arbitrary configuration, meaning that transient faults may corrupt each process's state. 
Our algorithms are designed to be self-stabilizing, \ie they converge to correct behavior even in the presence of a bounded (\eg one third of the processes) number of Byzantine processes together with a bounded (\eg one sixth of the processes) recurring state corruptions.

\smallskip
\noindent
\textbf{Legitimate States and Self-Stabilization.~~}
A system state is said to be \emph{legitimate} if it satisfies all requirements specified by the problem definition (\Cref{sec:prob}).
An algorithm is \emph{self-stabilizing} if, from any arbitrary state, the system is guaranteed to reach a legitimate configuration in a finite number of rounds and remains in legitimate configurations as long as the number of faults per round remains within tolerated thresholds.
Our algorithms stabilize even when starting from an arbitrarily corrupted state due to transient faults, and continue to function correctly under a bounded number of Byzantine processes and recurring state corruptions.

We define the \emph{stabilization time} as the worst-case number of synchronous rounds required for the system to reach a legitimate configuration from an arbitrary state, assuming up to $t$ Byzantine faults and up to $r$ recurring state corruptions per round.

We summarize the system description and main terms used in the sequel.
\begin{description}
    \item[Processes] \label{def:proc} $p_1, \dots ,p_n $ are a set of processes that execute the same protocol. Each process can send a message to any other process. Processes execute the protocol according to sequential global pulses. Messages are sent when a pulse occurs and are received before the next pulse appears. The communication graph is a clique. Each message is identified by the communication link through which it arrives. Therefore, the identity of the process that sent a message is known to the receiver.
    \item[Pulse number] A global pulse is associated with a global counter modulo a given integer $b$. The processes are exposed to the global pulse and the global pulse number  \footnote{The global pulse and global pulse number can themselves be an output of semi-synchronous algorithms, \eg \cite{DBLP:journals/jacm/DolevW04, DBLP:journals/jacm/LenzenR19}.}. We denote a global pulse that triggers the transition of the replicated state machine as $Pulse$, each such $Pulse$ is invoked when the global counter of pulses is zero. Then we have intermediate pulses, for which the pulse counter value is $1,2,\ldots,b-1$, denoted $IPulse$ to finalize the needed computation for the next replicated state. In the sequel, we may omit mentioning the $IPlus$ that controls the advance in the intermediate round stages whenever there is no possible confusion.
    \item[Faulty (Byzantine) Process] A faulty process is a process that deviates from the protocol. This includes processes that crash during execution or have unlimited computation power to send the most malicious sequence of messages to prevent the other processes from reaching an agreement. There are at most $f$ faulty processes where $f< \lceil n/3 \rceil - 1$.
    \item[Transient Faults] A (spontaneous) change of a state of a process to an arbitrary state in the domain of its states. There are at most $r$ transient faults where $r < \lceil n/6 \rceil -1$ in situations with no prior knowledge of the distribution of transient faults.
\end{description}

\section{Problem Statement and Solution Framework}
\label{sec:prob}

We study a class of agreement problems for systems that tolerate both recurring transient faults and Byzantine faults, provided that the number of Byzantine processes is less than $\lfloor n/3 \rfloor$ and the number of recurring transient faults is less than $\lceil n/6 \rceil - 1$.
We focus on three variants of Multi-valued Byzantine Agreement (MVBA). 
Our main contribution is a protocol that achieves \emph{interval validity} and consistency, building upon an algorithm that guarantees only \emph{weak validity}. 
For comparison, we also recall the definition of MVBA with \emph{strong validity}, and briefly mention the related concept of \emph{approximate agreement}.

\Cref{fig:arch} illustrates the layered architecture of our solution framework. At the bottom lies the synchronous message-passing system (\Cref{sec:sys}), which provides reliable, round-based communication among processes, activated by global pulses. 
On top of this network layer, the MVBA layer (Definition~\ref{def:interval-validity}, \Cref{alg:baa}) implements agreement primitives that tolerate Byzantine faults and recurring transient faults. This layer combines a multi-valued agreement structure with most-common and median-based selection mechanisms to ensure consistency and interval validity. It interacts with the communication layer via the \texttt{send} and \texttt{receive} interfaces, and with higher layers via synchronous \texttt{input} and \texttt{output} operations.
At the top of the stack is the \emph{State Machine Replication} layer (Definition~\ref{def:smr}, \Cref{alg:smr}), which relies on the agreement abstraction to execute deterministic transitions consistently across all correct replicas.

We clarify that, for the sake of a simple presentation, the studied problems are specified only by the requirements critical to understanding our contribution. For a more comprehensive treatment of consensus problems in distributed systems, we refer the reader to e.g, ~\cite{DBLP:books/sp/Raynal18, DBLP:books/mk/Herlihy2013, DBLP:books/mk/Lynch96, DBLP:books/daglib/0017536}.
It is also known that in synchronous message-passing systems, any consensus algorithm that terminates within a bounded number of rounds can be transformed into a self-stabilizing algorithm that recovers within the same bound after the occurrence of the last transient fault. See, for example, Theorem~2 of Georgiou \etal~\cite{DBLP:conf/sss/GeorgiouRS23}. 
Therefore, from this point onward, our work focuses exclusively on protocols that tolerate both recurring transient faults and Byzantine faults.

\subsection{Multi-valued Byzantine Agreement (MVBA)}

Our solution framework uses the problem specified in Definition~\ref{def:interval-validity} to construct a replicated state machine (Definition~\ref{def:smr}) that tolerates both Byzantine faults and recurring transient faults.

\begin{definition}[Reference Problem: MVBA with Strong Validity]
	\label{def:agreement}
	A set of $n$ processes executes a Multi-valued Byzantine Agreement algorithm. For all $1 \leq i \leq n$, each process $p_i$ starts with an input value $v_i \in V$ from some domain of values $V$ and returns a decision value $d_i \in V$ within a known constant number of synchronous rounds. The algorithm is correct if it satisfies the following properties:
	
	\begin{itemize}
		\item \emph{Consistency:} All non-faulty processes decide on the same value.
		\item \emph{Strong Validity:} All non-faulty processes decide on a value proposed by a non-faulty process.
	\end{itemize}
\end{definition}

A variation on Definition~\ref{def:agreement} substitutes the strong validity requirement with \emph{discrete interval validity} (Definition~\ref{def:interval-validity}), which captures the intuition that the decided value must lie within the interval of values proposed by non-faulty processes, even if a Byzantine process proposes it.

\begin{definition}[Target Problem: MVBA with Discrete Interval Validity]
	\label{def:interval-validity}
	A Multi-valued Byzantine Agreement algorithm satisfies \emph{discrete interval validity} if, in addition to satisfying consistency (Definition~\ref{def:agreement}), all non-faulty processes decide on the same value $d \in V$, and this integer value lies within the range of the inputs proposed by non-faulty processes.

	Formally, if $T = \{v_i \mid p_i \text{ is non-faulty}\}$ is the multiset of inputs from non-faulty processes, then $d \in \{\min T, \ldots, \max T\}$.
\end{definition}

An MVBA algorithm with discrete interval validity accepts values that Byzantine processes may have proposed, as long as they fall within the range of correct inputs and do not violate agreement.
The value chosen does not result from the (not necessarily integer) mean computation used to approximate the input value of the non-faulty processes, see \eg \cite{DBLP:journals/jacm/DolevLPSW86}.
Unlike weak validity (defined next), discrete interval validity does not introduce an artificial default value $\bot$.

In the sequel, we use interval validity instead of discrete interval validity whenever no confusion can arise. 

\subsection{Our Building Block: MVBA with Weak Validity}

Our solution framework for MVBA with interval validity builds upon a Byzantine Agreement algorithm that guarantees only \emph{weak validity}, such as the one proposed by Turpin and Coan~\cite{TURPIN198473}. Note that weak validity permits a decision on the default value $\bot$, which may not be acceptable for application-level correctness or progress.

\begin{definition}[Building Block: MVBA with Weak Validity]
	\label{def:weak-validity}
	A Multi-valued Byzantine Agreement algorithm satisfies \emph{weak validity} if all non-faulty processes decide on the same value (\ie satisfy consistency from Definition~\ref{def:agreement}), and the decided value is either:
	\begin{itemize}
		\item a value proposed by a non-faulty process, if all such processes proposed the same input; or
		\item a value from the decision domain $V$, which may include a default value $\bot$ otherwise.
	\end{itemize}
\end{definition}

A related but distinct problem is \emph{approximate agreement}~\cite{DBLP:journals/jacm/DolevLPSW86}, in which the decision values of non-faulty processes must be within a small $\epsilon$ range of each other and lie within the bounds of the correct inputs.

\subsection{Our Application: State Machine Replication}
\label{sec:SMRdef}

The state machine is a foundational model of computation with many variations. In general, a state machine $M$ has a set of states $q_1, q_2, \dots$ and a transition function $\delta(q, \mathit{in})$ that takes a state $q$ and an input $\mathit{in}$ and returns the resulting state. A state machine is deterministic if $\delta$ always returns the same result for the same input and state.
 
The State Machine Replication problem was introduced by Lamport in \cite{DBLP:journals/cacm/Lamport78} and later 
investigated under different settings.
In \cite{DBLP:journals/csur/Schneider90}, the authors specify the state machine replication problem in distributed systems prone to failures. A distributed protocol implementing state machine replication should verify the following two properties:

\begin{description}
\item [Agreement] Every non-faulty copy of the state machine receives every command;
\item [Order] Commands are processed in the same order by every non-faulty copy of the state machine.
\end{description}
 
A \emph{replicated state machine} consists of multiple identical replicas of $M$ that maintain synchronized states and execute transitions based on agreed inputs.
The following is a formal definition of a replicated state machine based on the definitions in \cite{DBLP:conf/cscml/DolevGMS18, 10.1145/98163.98167}. 

\begin{definition}
	\label{def:smr}
	\begin{description}
		\item[State Machine:] Let $M$ be a deterministic state machine. Let $S = \{s_1, s_2, \dots\}$ be its set of states, $\Sigma = \{\mathit{in}_1, \mathit{in}_2, \dots\}$ be its input alphabet, and $\delta: S \times \Sigma \rightarrow S$ be the transition function.
		
		\item[Replicas:] The processes $p_1, \dots, p_n$ that maintain local copies of $M$ and apply transitions using $\delta$.
		
		\item[Program:] A program is a sequence of atomic steps, each consisting of a local computation followed by a communication step.
		
		\item[Configuration and Global Input:] A \emph{configuration} is defined by a function over the replica states (\eg the most common state), and the \emph{global input} is defined by aggregating local inputs (\eg the median of all inputs).
		
		\item[Execution:] A replicated execution is an alternating sequence $R = c_1, \mathit{in}_1, c_2,$ $\mathit{in}_2,$ $\dots$ such that for all $i > 0$, $\delta(c_i, \mathit{in}_i) = c_{i+1}$.
	\end{description}
\end{definition}

See \eg \cite{DBLP:journals/iandc/BazziH22} for the importance of fault-tolerant replicated state machines in the scope of Blockchain technologies.

\section{Median-based  Multi-valued Byzantine Agreement}
\label{sec:med}
We design a one-shot Multi-valued Byzantine Agreement algorithm that returns the most frequently occurring value among the agreed-upon inputs, provided that value's observed frequency surpasses the threshold required to guarantee validity. Otherwise, the algorithm returns the median value.
In cases where applications tolerate a default value $\bot$, it is always trivially possible to return that value. However, we are interested in applications where default values would not be satisfactory.
Our algorithm works by first reaching agreement on a vector of inputs using an agreement algorithm that satisfies a weaker validity condition. After this step, each non-faulty process obtains an identical vector of values. The process then removes all default entries ($\bot$), sorts the resulting list—with repetitions preserved—and selects the median. If the list contains an even number $\ell$ of non-default elements, the algorithm returns the entry at position $\ell/2$ in the sorted list. We introduce a tunable threshold parameter $\alpha$, which controls when a value is considered sufficiently frequent to be selected directly. In situations where assumptions about the distribution of values of transient faults can be made, reducing the parameter $\alpha$ allows for greater transient fault tolerance, up to a third of all processes in the best case. When no assumptions can be made about the distribution of values of transient faults, setting $\alpha$ to a sixth of all processes yields the greatest transient fault tolerance. 

In the context of replicated state machine applications, we aim to prefer the most common value in the decision vector, provided it occurs frequently enough to ensure strong validity. Specifically, if a value appears in more than one-third of the entries plus an additional margin $\alpha$, it is selected. Otherwise, the median value is used. The parameter $\alpha$ accounts for the possible influence of recurring transient faults, allowing the algorithm to distinguish between values that reflect genuine consensus and those that may result from adversarial skew.

\subsection{Algorithm description}

We now present Algorithm~\ref{alg:baa}, which implements the median-based multi-valued Byzantine agreement. The algorithm begins by allocating an array $A$ to hold values received from other processes (line 13). Each process then broadcasts its own input value to all others (lines 14-15) and records received values in $A$ (lines 16-17). Since messages include sender identities, each process can associate received values with sender indices, ensuring that $A[i]$ stores the value claimed by process $p_i$.


Each process then invokes a weak-validity multi-valued Byzantine agreement algorithm on every entry in $A$, running these instances in parallel (lines 18-19). The algorithm $\FuncSty{WeakMVBA}$ is used here, which can be instantiated with the algorithm from Turpin and Coan~\cite{TURPIN198473}, listed in the Appendix for completeness. 
After this step, each non-faulty process holds an array $A$ in which each entry $A[i]$ reflects the agreed-upon value for process $p_i$'s input. Since the same agreement algorithm is applied to each index, and all non-faulty processes use the same inputs and follow the same logic, their resulting arrays are identical. The process then calls $\FuncSty{select\_value}(A)$ (line 20) to determine the final decision value, to be returned.

The function $\FuncSty{select\_value}$ processes the agreed-upon vector $A$ to determine the final output. It first removes all occurrences of the default value $\bot$ to produce a cleaned array $A_{\not\bot}$ (line 2), then counts the frequency of each remaining value using a dictionary $C$ (lines 4–6). The most common value $m$ is identified via an $\arg\max$ operation (line 7). If $m$ appears in at least $\lfloor k/3 \rfloor + 1 + \alpha$ entries (line 8), it is returned as the final decision. Otherwise, the array $A_{\not\bot}$ is sorted (line 10), and the median element—specifically, the $\lfloor k/2 \rfloor$-th entry—is returned (line 11). This rule ensures consistency across all non-faulty processes and protects against biased skewing from faulty or corrupted inputs.

The parameter $\alpha$ serves as a tunable resilience margin that reflects assumptions about the distribution and coordination of recurring transient faults. In adversarial scenarios, where faults may intentionally reinforce each other to skew the output, a conservatively set $\alpha = \lceil n/6 \rceil - 1$ ensures correctness despite worst-case behaviour. In more benign environments, such as when transient faults follow a uniform or independent distribution, it is often safe to choose a smaller $\alpha$, since the likelihood of multiple faults supporting the same incorrect value is significantly reduced. 
	
As noted earlier, we instantiate $\FuncSty{WeakMVBA}$ to be the algorithm of Turpin and Coan~\cite{TURPIN198473}, which returns a default value $\bot$ when agreement cannot be reached. However, the specific choice of agreement algorithm is not essential—any one-shot MVBA algorithm satisfying weak validity (\ie agreeing on a correct input when all non-faulty processes begin with the same value, and otherwise deciding on a value from the decision domain including $\bot$) is sufficient. This modularity enables reuse and flexibility in system design.
	
Our decision rule then selects the most frequently occurring agreed-upon value—excluding $\bot$—if it appears at least $\lfloor n/3 \rfloor + 1 + \alpha$ times. Otherwise, the rule returns the median value from the sorted array $A_{\not\bot}$. In case of ties (\eg even-length arrays), we break them deterministically by selecting the lower of the two middle values.
	
The returned value by the Algorithm \ref{alg:baa} is always consistent across the processes and satisfies the weak validity; proof provided next.\\

\begin{algorithm}[p]
\begin{small}
       \tcc{accepts a vector of inputs and returns one item from the vector, never returns $\bot$}
    \SetKwFunction{SelVal}{select\_value}
    \Fn{\SelVal{$A$}} {
        \KwData{$0\leq \alpha < \lceil n/6 \rceil - 1$ a parameter reflecting recurring transient fault distribution}
        \tcc{Create new array $A_{\not\bot}$ as a copy of $A$ with $\bot$ values removed}
        $A_{\not\bot} \gets \FuncSty{remove\_bottom}(A)$\;
        \tcc{After removing bottom values, we get a new array length $k \leq n$}
        $k \gets |A_{\not\bot}|$\;
        \tcc{Take $C$ to be a dictionary where keys are from $V$ and values are their frequency in $A_{\not\bot}$}
        $C \gets $ an empty dictionary\;
        \ForEach{$i \in [1 \ldots k]$} {
            $C[A_{\not\bot}[i]] \gets C[A_{\not\bot}[i]] + 1$\;
        }
        \tcc{Take $m$ to be the most common value, that is, the key with the highest value in $C$}
        $m \gets \arg\max_{v \in A_{\not\bot}}C[v]$\;
        \tcc{If the most common value appears at least $\lfloor k/3 \rfloor +1 + \alpha$ times, return it}
        \If { $C[m] \geq \lfloor k/3 \rfloor+1 + \alpha$} { 
            return $m$\;
        }
        \tcc{Else, return the median of the proposed values}
        $A^\prime \gets \FuncSty{sort}(A_{\not\bot})$\;
        return $A^\prime[\lfloor\frac{k}{2}\rfloor]$\;
    }

    \SetKwFunction{ByzAgr}{median\_byzantine\_agreement}
    \Fn{\ByzAgr{$v \in V$}}{
        \KwData{$P[1 \ldots n]$ array of addresses of all the parties}
        \tcc{Allocate an empty array to hold the initial values of all parties}
        $A[1 \ldots n] \gets \bot$\;
        \tcc{Broadcast my initial value to all parties (including self)}
        \ForEach{$i \in [1 \ldots  n]$} {
            send $v$ to $P[i]$\;
        }
        \tcc{Save the initial values received from other parties  (including self)}
        \ForEach{value $u_i$ received from party $p_i$} {
            $A[i] \gets u_i$\;
        }
        \tcc{Execute the Multi-Valued Byzantine Agreement protocol with Weak Validity in parallel to reach agreement on each value in $A$}
        \ForEach{$i \in [1 \ldots  n]$} {
            $A[i]\gets \FuncSty{WeakMVBA}(A[i])$\;
        }
        $x \gets \FuncSty{select\_value}(A)$\;
        return $x$\;
    }
    \caption{Median-based Byzantine Agreement protocol}
    \label{alg:baa}
    \end{small}
\end{algorithm}

\subsection{Correctness Proof}

We begin by proving that Algorithm \ref{alg:baa} has consistency and interval validity despite less than $\lfloor n/3 \rfloor$ Byzantine faults. This is shown in Lemmas \ref{lemma:baaconst} and \ref{lemma:baaval}. Then we further expand this result by showing that Algorithm \ref{alg:baa} remains correct despite $\lceil n/6 \rceil - 1$ transient faults in addition to $\lfloor n/3 \rfloor$ Byzantine faults, as shown in Theorem \ref{thm:baaft}. 

\begin{lemma}
\label{lemma:baaconst}
    Algorithm \ref{alg:baa} satisfies the consistency property of Definition \ref{def:agreement} despite less than $\lfloor n/3 \rfloor$ Byzantine faults.   
\end{lemma}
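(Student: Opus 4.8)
The plan is to reduce consistency of Algorithm~\ref{alg:baa} to the consistency already guaranteed by the underlying $\FuncSty{WeakMVBA}$ instances, and then observe that $\FuncSty{select\_value}$ is a deterministic function of its input array. The key structural fact is that every non-faulty process computes its decision value by applying the \emph{same} deterministic procedure to the \emph{same} array $A$, so if I can show all non-faulty processes hold identical arrays $A$ after line~19, consistency of the final output follows immediately.

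First I would fix attention on a single index $i \in [1\ldots n]$ and the parallel invocation $\FuncSty{WeakMVBA}(A[i])$ on line~19. By the weak-validity guarantee (Definition~\ref{def:weak-validity}), which in particular includes the consistency property of Definition~\ref{def:agreement}, all non-faulty processes decide the same value in that instance, despite up to $\lfloor n/3 \rfloor - 1$ Byzantine faults. Note this holds even though the inputs fed into the instance for index $i$ may differ across processes (a Byzantine $p_i$ may have sent different values $u_i$ to different receivers on lines 16--17): consistency of $\FuncSty{WeakMVBA}$ does not require agreement on inputs, only on outputs. Hence after line~19 every non-faulty process stores the identical agreed value in its local $A[i]$. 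Ranging over all $i \in [1\ldots n]$, every non-faulty process holds the \emph{same} array $A$.

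Next I would argue that $\FuncSty{select\_value}(A)$ is a pure deterministic function of $A$: given identical $A$, the cleaned array $A_{\not\bot}$ (line~2), its length $k$ (line~3), the frequency dictionary $C$ (lines~4--6), the $\arg\max$ value $m$ (line~7), the threshold test on line~8, the sorted array $A'$ (line~10), and the median index $\lfloor k/2\rfloor$ (line~11) are all determined solely by $A$. The only subtlety is that $\arg\max$ and the median on an even-length array could be ambiguous; I would invoke the deterministic tie-breaking rule stated in the algorithm description (break ties by selecting the lower of the two candidate values), so that $m$ and the returned median are uniquely defined functions of $A$. Therefore all non-faulty processes return the identical value $x$, establishing consistency.

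The main obstacle, and the point deserving the most care, is justifying that $\FuncSty{WeakMVBA}$ consistency applies \emph{per index in parallel} under a shared Byzantine budget: the $n$ instances run simultaneously rather than sequentially, so I must confirm that the same set of at most $\lfloor n/3\rfloor - 1$ Byzantine processes participates across all instances and that weak validity holds for each instance under exactly this bound. Since the Byzantine set is fixed throughout the execution and each instance independently satisfies Definition~\ref{def:weak-validity} under that bound, this composition is sound; but it is worth stating explicitly, because it is the hinge on which the whole reduction turns.
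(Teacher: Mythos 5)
Your proposal is correct and follows essentially the same route as the paper's proof: both reduce to the per-index consistency of $\FuncSty{WeakMVBA}$ (line~19) to conclude that all non-faulty processes hold identical arrays $A$, and then conclude that the (deterministic) selection rule returns the same value everywhere. If anything, your version is slightly tighter for this particular lemma --- the paper additionally invokes weak validity to pin down honest entries $A[j]=v_j$, which is really only needed for the interval-validity lemma, while you explicitly cover the determinism of both return branches (threshold and median, with tie-breaking) and the parallel-composition point, which the paper leaves implicit.
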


\begin{proof} 
Let $K$ be the set of $\lfloor\frac{2n}{3}\rfloor+1$ indexes such that $p_j$ is non-faulty for all $j\in K$ and let $A_i$ denote the local array $A$ of process $p_i$.
When $p_i$ reaches line 18, it holds that $A_i[j]=v_j$ for all $j\in K$.\\
Therefore, for any $i\in K$, due to the weak validity of $\FuncSty{WeakMVBA}$, it holds that in line 19, $\FuncSty{WeakMVBA}(A_i[j])$ returns $v_j$ since all correct participants executed the protocol with the same input value, thus no action taken by the faulty processes could change the result due to the $f$-fault tolerance of the protocol $\FuncSty{WeakMVBA}$. 

Moreover, based on the consistency of $\FuncSty{WeakMVBA}$, when all non-faulty processes reach line 20, we have that $A_j = A_i$ for all $j,i\in K$, that is, $A$ is consistent, for entries $\ell \not\in K$ too.
 
Every key inserted into the dictionary $C$ is a value from $A$ (line 6). This means that for any $m$ we choose (line 7), there exists an $i$ such that $A[i]=m$ as $m$ is a value of $C$. Therefore, since $A$ is identical in all correct processes, returning $m$ (line 9) returns the same output to all correct processes.  
\end{proof}

\begin{lemma} \label{lemma:baaval}
    Algorithm \ref{alg:baa} satisfies the interval validity property (Definition \ref{def:interval-validity}) despite less than $\lfloor n/3 \rfloor$ Byzantine faults.
\end{lemma}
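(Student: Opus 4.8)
The plan is to build directly on \Cref{lemma:baaconst}, which already supplies both the consistency half of \Cref{def:interval-validity} and a precise description of the agreed array: after the parallel \FuncSty{WeakMVBA} calls, every non-faulty process holds the \emph{same} array $A$, and for every non-faulty $p_j$ we have $A[j]=v_j$ (weak validity forces this, since all non-faulty processes feed the same $v_j$ into the $j$-th instance). Thus the only remaining obligation is to show that the single returned value $d$ lies in $\{\min T,\dots,\max T\}$, where $T$ is the (nonempty) multiset of non-faulty inputs. I would first fix notation: let $\beta$ be the number of entries of $A_{\not\bot}$ originating from Byzantine processes, so $\beta\le f<\lfloor n/3\rfloor$. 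Since the non-faulty inputs are genuine values (not $\bot$), every $v_j\in T$ survives \FuncSty{remove\_bottom}, giving $T\subseteq A_{\not\bot}$ and $k=|A_{\not\bot}|\ge (n-f)+\beta$. The pivotal observation is that any value $w$ with $w<\min T$ or $w>\max T$ can appear in $A_{\not\bot}$ only at a Byzantine entry, so the total count of out-of-range values is at most $\beta$. The proof then splits on the branch taken by \FuncSty{select\_value}.

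For the most-common branch, $d=m$ with $C[m]\ge\lfloor k/3\rfloor+1+\alpha$, and I would show an out-of-range value cannot meet this threshold. Any out-of-range $w$ has $C[w]\le\beta$, while $k-3\beta\ge(n-f+\beta)-3\beta=n-f-2\beta\ge n-3f\ge 3>0$ (using $\beta\le f$ and $3f\le n-3$), so $\beta\le\lfloor k/3\rfloor<\lfloor k/3\rfloor+1+\alpha$. Hence no out-of-range value passes the test, and the selected $m$ is forced to lie in $[\min T,\max T]$.

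For the median branch, $d=A'[\lfloor k/2\rfloor]$. From $k-2\beta\ge(n-f+\beta)-2\beta=n-f-\beta\ge n-2f\ge 2$, i.e.\ $k\ge 2\beta+2$, I get $\beta<\lfloor k/2\rfloor$. At most $\beta$ sorted entries fall below $\min T$ and at most $\beta$ fall above $\max T$; if $A'[\lfloor k/2\rfloor]<\min T$ then the first $\lfloor k/2\rfloor>\beta$ entries would all be out-of-range, and if $A'[\lfloor k/2\rfloor]>\max T$ then the last $\lceil k/2\rceil+1>\beta$ entries would all be out-of-range, both impossible. So the median entry is pinned in $[\min T,\max T]$. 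Combining the two branches with the consistency already established completes the argument.

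The subtle point I expect to be the main obstacle is the most-common branch: it is tempting but wrong to bound the out-of-range count crudely by $f$ and compare against a \emph{fixed} threshold, because when few Byzantine values are injected, $\lfloor k/3\rfloor+1$ can itself be smaller than $f$. The resolution is that the out-of-range count is governed by $\beta$, the \emph{same} quantity that inflates $k$ above its floor $n-f$; tying them together through $k\ge(n-f)+\beta$ is exactly what makes the threshold scale correctly. The remaining work is purely floor/ceiling bookkeeping for even versus odd $k$, which the displayed inequalities already absorb with room to spare; note also that $\alpha\ge 0$ only strengthens the test, so the Byzantine-only claim holds for every admissible value of $\alpha$.
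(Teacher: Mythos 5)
Your proof is correct and takes essentially the same route as the paper's: consistency is inherited from Lemma~\ref{lemma:baaconst}, out-of-range values can occupy only Byzantine entries of the agreed array, and each return branch (most-common and median) is ruled out for out-of-range values by a counting argument. If anything, yours is the more careful rendering: the paper's own proof silently argues with $n$ in place of $k$ (threshold $\lfloor n/3\rfloor+1$, median index $\lfloor n/2\rfloor$), glossing over exactly the $\bot$-removal subtlety that you isolate and resolve by tying the number $\beta$ of surviving Byzantine entries to the array length via $k \ge (n-f)+\beta$.
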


\begin{proof}
We prove interval validity by showing that any value returned by the algorithm is within the interval of values of non-faulty processes. We do so by looking at the possible cases the Byzantine processes can create and showing that none violate interval validity. Since any value is either returned in line 9 or 11, we address each case separately.
Recall the set $K$ from the proof of Lemma \ref{lemma:baaconst}.
Let $T=\{v_j | j\in K\}$ be a set of values of non-faulty processes and let $I=\{x | \min T \leq x \leq \max T \}$ be the interval of the values of non-faulty processes.
By Lemma \ref{lemma:baaconst}, when all processes reach line 9, $A$ is consistent across the non-faulty processes.
Consider the possible cases for the distribution of values in $A$:
\begin{enumerate}
    \item All of the faulty processes sent values inside of $I$
    \item All of the faulty processes sent values outside of $I$
    \item Some of the faulty processes sent values outside of $I$
\end{enumerate}

\begin{itemize}
    \item In case 1, all values in $A$ are within $I$; thus, any decision value satisfies the interval validity property.
    
    \item {\it Notice that we return values only on lines 9 and 11 (which get returned again in line 20).}
    
    \item In cases 2 \& 3, 
            consider situations where $m$ is returned (line 9).
                    This only happens if at least $\lfloor\frac{n}{3}\rfloor+1$ processes send value $m$ due to the condition on line 8.
                    There are at most $f < \lfloor\frac{n}{3}\rfloor+1$ faulty processes.
                    Therefore they can only set  $\lfloor\frac{n}{3}\rfloor$ entries of $A$.
                    Thus, the faulty processes cannot force $m$ to be outside of $I$, therefore $m \in I$.
            \item {\it Notice that in line 10, $A^\prime$ is created by sorting $A_{\not\bot}$, which in turn is a copy of $A$.}
            \item In cases 2 \& 3 again, consider now situations where $A^\prime[\lfloor\frac{n}{2}\rfloor]$ (the median of $A$) is returned (line 11).
                    At most, $\lfloor n/3 \rfloor$ entries of $A$ have a value outside of $I$.
                    Therefore, there are at least $\lfloor\frac{2n}{3} \rfloor+1$ entries of $A$ with a value inside of $I$.
                    The same holds for $A^\prime$, as a copy of the values of $A$.
                    This implies that values of $A$ that are outside of $I$ are either in the top third or the bottom third, but never in the middle of $A^\prime$.
                    Therefore, the median of $A$ must be inside of $I$, that is, $A^\prime[\lfloor\frac{n}{2}\rfloor]\in I$.
\end{itemize}
\end{proof}
 
\begin{theorem}
    Algorithm \ref{alg:baa} implements the Multi-valued Byzantine Agreement specification (Definition \ref{def:interval-validity}) with interval validity in systems with less than $\lfloor n/3 \rfloor$ Byzantine processes. 
\end{theorem}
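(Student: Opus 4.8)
The plan is to recognize that this theorem consolidates the two preceding lemmas together with a termination argument, rather than constructing anything new. Definition~\ref{def:interval-validity} bundles exactly two guarantees: consistency, inherited verbatim from Definition~\ref{def:agreement}, and the interval-containment condition $d \in \{\min T, \ldots, \max T\}$ on the common decision value. First I would invoke Lemma~\ref{lemma:baaconst} to obtain that all non-faulty processes decide the same value, and Lemma~\ref{lemma:baaval} to obtain that this value lies in the interval of non-faulty inputs; both hold under the stated bound of fewer than $\lfloor n/3 \rfloor$ Byzantine processes. Composing the two immediately yields a single decision value $d$ adopted by every non-faulty process and lying in $\{\min T, \ldots, \max T\}$, which is precisely discrete interval validity.

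The one requirement not explicitly settled by the lemmas is termination within a known constant number of synchronous rounds, as demanded by Definition~\ref{def:agreement}. I would supply this by decomposing Algorithm~\ref{alg:baa} phase by phase: the broadcast and receive of initial values (lines 14--17) consume a single round; the $n$ invocations of $\FuncSty{WeakMVBA}$ (lines 18--19) are run in parallel, so they share the same synchronous rounds and all terminate within the constant round bound guaranteed by the Turpin--Coan protocol~\cite{TURPIN198473}, rather than accumulating $n$ times that cost; and $\FuncSty{select\_value}$ (line 20) is purely local computation requiring no communication. Summing these contributions gives a fixed constant independent of the inputs, establishing bounded termination.

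I expect no deep obstacle here, since the heavy lifting was already done in Lemmas~\ref{lemma:baaconst} and~\ref{lemma:baaval}. The only point warranting care is the termination bound: one must confirm that running the weak-validity instances in parallel does not inflate the round complexity, which holds precisely because the synchronous model advances all instances simultaneously on each global pulse. With consistency, interval validity, and bounded termination all in hand, the algorithm satisfies every clause of Definition~\ref{def:interval-validity}, completing the proof.
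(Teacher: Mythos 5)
Your proposal is correct and takes essentially the same approach as the paper, whose entire proof is simply ``Follows from Lemmas~\ref{lemma:baaconst} and~\ref{lemma:baaval}.'' Your additional termination argument (one round for the broadcast, parallel constant-round \FuncSty{WeakMVBA} instances, local-only \FuncSty{select\_value}) is a sound and welcome refinement that the paper leaves implicit, but it does not change the underlying route.
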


\begin{proof}
Follows from Lemmas \ref{lemma:baaconst} and \ref{lemma:baaval}.
\end{proof}

\begin{lemma}
    \label{lemma:cbat}
    Any Byzantine agreement protocol that satisfies weak validity remains consistent despite transient faults. 
\end{lemma}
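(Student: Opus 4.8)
The plan is to reduce the effect of a transient fault to that of a Byzantine fault and then lean on the consistency guarantee that weak validity already carries. Recall that, by \Cref{def:weak-validity}, weak validity subsumes consistency, so in the purely Byzantine model every such protocol already makes all non-faulty processes decide the same value. It therefore suffices to show that layering transient faults on top of the Byzantine faults does not cause two processes that remain correct to disagree.

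The central observation is that a transient fault, as defined in \Cref{sec:sys}, only overwrites a process's local state with an arbitrary state drawn from its own domain; it confers no power that a Byzantine process does not already have. Hence, to any correct observer, a process whose state was corrupted and that subsequently keeps sending messages is indistinguishable from a Byzantine process that elects to sit in that same state and emit those same messages. I would make this precise with a reference execution in which every transiently faulted process is replaced by a Byzantine process scripted to send exactly the messages its real counterpart sends. By construction each correct process receives an identical message sequence in the two executions, performs identical local computation, and reaches an identical decision.

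Consistency of the protocol can then be invoked inside the reference execution: the processes that are neither Byzantine nor transiently faulted are genuinely non-faulty there, so the agreement clause forces them to decide a common value, and the indistinguishability above carries this agreement back to the real execution for exactly the same set of processes. The delicate part of the bookkeeping is that the reduction nominally inflates the Byzantine count to $f + r$, which may exceed the usual $\lfloor n/3 \rfloor$ tolerance; here one must exploit that consistency, unlike validity, is purely an agreement property. A transiently faulted process that resumes correct execution sends the protocol-prescribed messages for its corrupted state to everyone alike --- it does not equivocate --- and so, as far as agreement is concerned, it behaves like a correct process that merely began with an arbitrary input rather than like an adversary who can split honest views. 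Such processes therefore need not be charged against the equivocation budget, leaving the genuine Byzantine processes, bounded by $f < \lceil n/3 \rceil - 1$, as the only threat to agreement.

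The main obstacle I expect is not the high-level reduction but the timing and recurrence of the faults. A transient fault may strike in the middle of a round, after messages consistent with the old state have already been broadcast, so the process's overall history need not match any single honest execution; I would have to argue that this temporal inconsistency still does not amount to the per-round equivocation that actually endangers agreement, or else show it is absorbed by the Byzantine simulation without breaching the threshold. Because the faults recur, the simulation must moreover be set up round by round, re-scripting the substituted processes at each pulse while verifying that the count of genuinely equivocating processes never crosses the tolerated bound.
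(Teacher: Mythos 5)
Your proposal is correct and, once you discard the inflated-Byzantine-count simulation and pivot to the non-equivocation argument, its core is exactly the paper's own proof: a transiently faulted process broadcasts the same (possibly corrupted, possibly $\bot$) value to all processes, so with respect to agreement it is indistinguishable from a correct process that started with an arbitrary input, and hence it never needs to be charged against the Byzantine budget. The mid-round timing obstacle you flag at the end is moot in this paper's model, since recurring transient faults are assumed to strike only at the beginning of each $Pulse$, i.e., before the agreement protocol is invoked, so every corrupted process runs the entire protocol consistently from its corrupted state.
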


\begin{proof}
    Unlike a Byzantine process, a process that experiences a transient fault broadcasts the same value to all the other processes. It can differ from the value the process computed in the previous pulse, or even be the default value $\bot$.   Therefore, the value's impact on the decision computation is consistent across the non-faulty processes. 
\end{proof}

\begin{theorem}
    \label{thm:baaft}
    Algorithm \ref{alg:baa} implements the Multi-valued Byzantine Agreement specification (Definition \ref{def:interval-validity}) with interval validity in systems with less than $\lfloor n/3 \rfloor$ Byzantine processes and less than $\alpha < \lceil n/6 \rceil - 1$ malicious transient faults.
\end{theorem}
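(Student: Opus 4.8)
The plan is to reduce the theorem to the two properties already established for the Byzantine-only setting—consistency (Lemma~\ref{lemma:baaconst}) and interval validity (Lemma~\ref{lemma:baaval})—by exploiting the structural difference between Byzantine and transient faults captured in Lemma~\ref{lemma:cbat}: a transiently faulty process still broadcasts a \emph{single} value to everyone and otherwise runs the protocol honestly, so its only defect is a possibly-corrupted input. Accordingly, for the round under analysis I would partition the $n$ processes into the honest (uncorrupted) set $H$, a Byzantine set of size at most $f < \lfloor n/3\rfloor$, and a transiently faulty set of size at most $r \le \alpha < \lceil n/6\rceil - 1$, and define the target interval by $T = \{v_j \mid j \in H\}$ and $I = [\min T, \max T]$, so that every honest entry satisfies $A[j] = v_j \in I$.

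For consistency, I would invoke Lemma~\ref{lemma:cbat} to argue that because each transiently faulty process feeds the \emph{same} value into every $\FuncSty{WeakMVBA}(A[i])$ instance, the consistency and weak-validity guarantees of $\FuncSty{WeakMVBA}$ are unaffected, exactly as in the proof of Lemma~\ref{lemma:baaconst}. Hence every non-faulty process still obtains an identical array $A$ after line~19, and since $\FuncSty{select\_value}$ is deterministic, all of them return the same value on line~20. Thus transient faults introduce no new way to break agreement, and consistency carries over directly.

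For interval validity, I would re-run the case analysis of Lemma~\ref{lemma:baaval}, raising the count of entries of $A$ that can fall outside $I$ from $f$ to $f+r$ (Byzantine entries plus the at most $r$ corrupted transient entries). In the median branch (line~11) the argument is essentially unchanged, since $f + r < \lfloor n/3\rfloor + \lceil n/6\rceil - 1 < n/2$ forces the honest entries to be a strict majority of the non-$\bot$ entries, so fewer than half of the sorted positions can hold out-of-interval values and the element at position $\lfloor k/2\rfloor$ must lie in $I$. In the most-common branch (line~9) I would argue that if the selected value $m$ were outside $I$, then no honest entry could equal $m$, so $C[m] \le f + r$; the threshold $\lfloor k/3\rfloor + 1 + \alpha$ is engineered so that the extra margin $\alpha \ge r$ absorbs the transient contribution while $\lfloor k/3\rfloor + 1$ dominates the Byzantine contribution, yielding $C[m] < \lfloor k/3\rfloor + 1 + \alpha$ and contradicting the selection condition, so $m \in I$.

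The step I expect to be the main obstacle is the threshold arithmetic in the most-common branch: one must verify $f + r < \lfloor k/3\rfloor + 1 + \alpha$ uniformly over all admissible $k$, the number of surviving non-$\bot$ entries, which can shrink toward $|H|$ as Byzantine instances of $\FuncSty{WeakMVBA}$ return $\bot$. Because $k$ may be strictly smaller than $n$ while the fault bounds are phrased in terms of $n$, reconciling the $\lfloor k/3\rfloor$ term with $f < \lfloor n/3\rfloor$ requires careful handling of the floor and ceiling functions, and this is precisely where the hypothesis $r \le \alpha$ must be used in its sharpest form; confirming the honest majority for the median branch is the other place where the exact interplay of the two fault budgets $f < \lfloor n/3\rfloor$ and $r < \lceil n/6\rceil - 1$ becomes essential.
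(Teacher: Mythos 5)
Your proposal is correct and follows essentially the same route as the paper's proof: consistency is obtained from Lemma~\ref{lemma:baaconst} together with Lemma~\ref{lemma:cbat}, and interval validity by re-running the two-branch case analysis of Lemma~\ref{lemma:baaval} (most-common value, then median) with the out-of-interval budget enlarged from $f$ to $f+\alpha$, using the same threshold arithmetic at line~8 and the same honest-majority argument at line~11. If anything, your handling of the $k$-versus-$n$ issue is more careful than the paper's own proof, which silently writes $\lfloor n/3\rfloor+1+\alpha$ and $A^\prime[\lfloor n/2\rfloor]$ where the algorithm actually uses $\lfloor k/3\rfloor+1+\alpha$ and $A^\prime[\lfloor k/2\rfloor]$; the obstacle you flag is real but resolvable, since every faulty entry that becomes $\bot$ shrinks $k$ by one while also removing itself from the out-of-interval count.
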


\begin{proof}
    To prove consistency and interval validity still hold despite additional malicious transient faults, we consider the case in which a system of $n$ processes executing Algorithm \ref{alg:baa} experiences $\alpha$ malicious transient faults in addition to having at most $\lfloor n/3 \rfloor$ Byzantine processes. First, we show that consistency holds, then we show validity holds up to the set parameter $\alpha$ of malicious transient faults. To do so, we reexamine the cases from the proof of Lemma \ref{lemma:baaval}, this time considering malicious transient faults.

    Consistency follows from Lemma \ref{lemma:baaconst} and Lemma \ref{lemma:cbat}.
    By Lemma \ref{lemma:cbat}, every entry $A[i]$ where $i$ is the index of a process with a transient fault is consistent across the non-faulty processes.
    Consider again the three possible cases for the values of $A$ from the proof of Lemma \ref{lemma:baaval}, this time, accounting for the additional transient faults.

    \begin{itemize}
        \item In case 1, as in Lemma \ref{lemma:baaval}, interval validity property is trivially satisfied
        \item In case 2 \& 3, 
            consider situations where $m$ is returned (line 9). 
                Let $x\not\in I$ be the most common value proposed by a process with a transient fault. 
                Suppose then that $x$ is the most common value in $A$, this implies that after line 7, $m=x$. 
                Byzantine processes do not control transient processes.
                Therefore, in order to maximize the values in $A$ to be outside of $I$, all Byzatine processes must propose the value $x$.
                Thus, the frequency of $x$ in $A$ is at most $\alpha + \lfloor n/3 \rfloor$, due to the assumption of at most $\alpha$ transient faults.
                But due to the condition on line 8, $m$ will only be returned if the frequency of $m$ is greater than $\lfloor n/3 \rfloor + 1 + \alpha$.
                Therefore, faulty processes (Byzantine + transient) will always fall short of the threshold.
                Therefore, if $m$ is returned, then $m\in I$.
            
            \item In case 2 \& 3 again, consider now situations where $A^\prime[\lfloor\frac{n}{2}\rfloor]$ is returned (line 11).
                There are at most $\lfloor n/2 \rfloor - 1$ values outside of $I$ in $A$ since $\alpha < \lceil n/6 \rceil - 1$.
                Therefore there are at least $\lfloor\frac{n}{2} \rfloor + 1$ values inside of $I$ in $A$.
                {\it Notice that in line 10, $A^\prime$ is created by sorting $A_{\not\bot}$, which in turn is a copy of $A$.}
                The same holds for $A^\prime$.
                The values outside of $I$ are placed in either the bottom half or top half of $A^\prime$, but in either case, there are at most $\lfloor n/2 \rfloor - 1$ of them
                Thus $A^\prime[\lfloor\frac{n}{2}\rfloor]\in I$.
            
    \end{itemize}
\end{proof}

\begin{corollary}
    If the values yielded by recurrent transient faults are maliciously chosen, the bound on the tolerated recurrent transient faults is $\alpha=\lceil n/6 \rceil - 1$.  If the values of the recurrent transient faults are (\eg uniformly) distributed, then $\alpha$ reflects the expected number (with safety factor) of recurrent transient faults that may happen to choose a malicious value. 
\end{corollary}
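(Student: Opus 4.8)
The plan is to treat the two clauses of the corollary separately, since the first is a worst-case restatement of \Cref{thm:baaft} while the second is a probabilistic refinement of the same argument. For the first clause, I would observe that \Cref{thm:baaft} already establishes consistency and interval validity for any fixed $\alpha$ with up to $\alpha < \lceil n/6 \rceil - 1$ malicious transient faults, where \emph{malicious} means the faults may coordinate and each may set its entry of $A$ to whatever value the adversary prefers. The proof of \Cref{thm:baaft} shows the tight configuration is precisely the one in which every transient-faulty process reinforces the single value $x \notin I$ that the Byzantine coalition also pushes: the line-9 threshold $\lfloor k/3\rfloor + 1 + \alpha$ is calibrated so that $\lfloor n/3\rfloor$ Byzantine copies plus $\alpha$ transient copies of $x$ fall one short, and the line-11 median stays inside $I$ because fewer than $\lfloor n/3\rfloor + \lceil n/6\rceil - 1 < \lfloor n/2\rfloor$ entries can lie outside $I$. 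Taking $\alpha$ to its extreme admissible value $\lceil n/6\rceil - 1$ therefore identifies the largest number of worst-case (maliciously valued) transient faults the protocol can absorb, which is the first clause.

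For the second clause, I would replace the adversarial-coordination assumption with a stochastic model: each recurrent transient fault independently draws its corrupted value from $V$ according to the assumed distribution (\eg uniform). The quantity that actually endangers correctness, as isolated in the proof of \Cref{thm:baaft}, is not the total number $R$ of transient faults but the number that happen to land on a value \emph{useful} to the adversary --- for the line-9 path, the count reinforcing the specific value $x \notin I$ the Byzantine processes push, and for the line-11 path, the count of transient faults landing outside $I$. I would define, for each candidate value $v$, the random variable $X_v$ counting transient faults that select $v$; under independence, $X_v$ is stochastically dominated by a $\mathrm{Binomial}(R, p_v)$ variable with $\mathbb{E}[X_v] = R\,p_v$, where $p_v$ is the probability mass the distribution assigns to $v$.

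The core step is then a concentration argument: by a Chernoff/Hoeffding bound together with a union bound over the at-most-$|V|$ relevant values $v \notin I$, the probability that \emph{any} such value is reinforced by more than $\mathbb{E}[X_v]$ plus a safety margin $s$ is at most $|V|\exp(-\Theta(s^2/R))$. Choosing $\alpha$ equal to this expectation-plus-safety-factor makes the bad event ``some $x \notin I$ is reinforced more than $\alpha$ times'' occur with probability at most a prescribed $\varepsilon$; on the complementary high-probability event, the two inequalities in the proof of \Cref{thm:baaft} go through verbatim, so consistency and interval validity are preserved even though the total number $R$ of random transient faults may substantially exceed $\lceil n/6\rceil - 1$. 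This formalizes the statement that, in the distributed case, $\alpha$ need only reflect the \emph{expected} number of faults choosing a malicious value, with the safety factor $s$ controlling the failure probability.

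The main obstacle I anticipate is the probabilistic bookkeeping in the second clause: correctly pinning down which random count governs each of the two return paths, handling the adversary's ability to pick $x$ after the transient values are realized (resolved by union-bounding over all $x \notin I$, so the Byzantine choice cannot exploit a favorable realization), and calibrating $s$ so that both the target failure probability $\varepsilon$ and the resulting gain in tolerated $R$ are made explicit. The worst-case clause, by contrast, should follow essentially immediately from \Cref{thm:baaft}.
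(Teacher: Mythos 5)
Your first clause is exactly how the paper treats it: the worst-case bound is an immediate restatement of \Cref{thm:baaft} with $\alpha$ pushed to its extreme admissible value, and the paper offers nothing beyond that. Your formalization of the second clause is also in the spirit of the paper's (entirely informal) justification: the paper, too, argues that only those random faults that happen to coincide with a value the Byzantine coalition can rally behind are dangerous, and it copes with the adversary choosing its value \emph{after} the realization by reasoning about the most popular random value --- which is what your union bound over $x \notin I$ achieves. Up to that point the two arguments agree, with yours being the more rigorous.

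There is, however, a genuine gap in your conclusion that ``consistency and interval validity are preserved even though the total number $R$ of random transient faults may substantially exceed $\lceil n/6 \rceil - 1$.'' Your Chernoff-plus-union-bound argument controls $\max_{v \notin I} X_v$, which is precisely what the line-9 threshold inequality of \Cref{alg:baa} needs. But the line-11 (median) inequality in the proof of \Cref{thm:baaft} requires that the \emph{total} number of entries of $A$ outside $I$, namely $f + \sum_{v \notin I} X_v$, stay below roughly $\lfloor k/2 \rfloor$. Under a uniform distribution over a large domain $V$, a corrupted value lands outside the honest interval $I$ with probability close to $1$, so $\sum_{v \notin I} X_v \approx R$ with high probability; bounding each individual $X_v$ cannot control this sum. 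Hence if $R$ substantially exceeds $\lceil n/6 \rceil - 1$, the Byzantine processes (all placing their values on one side of $I$) together with the scattered random faults can push the median outside $I$, and interval validity fails on the median path. The benefit of randomly valued faults accrues only to the line-9 / most-common-value path, i.e., to strong validity in the regime where honest uncorrupted inputs coincide, as in the state agreement of \Cref{alg:smr}. The paper is implicitly careful about exactly this: its later ``benign transient faults'' discussion restricts the beyond-$\lceil n/6 \rceil$ claim to strong validity on the \emph{state}, and explicitly sets aside the input agreement, which relies only on the median criterion. To repair your proof, either keep the total $R < \lceil n/6 \rceil - 1$ and let the randomness justify only a smaller setting of the parameter $\alpha$ (expected coincidence count plus safety factor), or claim the increased tolerance in $R$ only for executions in which some value is proposed by more than $\lfloor k/3 \rfloor + 1 + \alpha$ honest uncorrupted processes, so that line 11 is never the deciding path.
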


We assume adversarial behavior if there is no prior knowledge regarding the distribution of transient faults. Therefore, we can tolerate at most $\lceil n/6 \rceil -1$ transient faults since the non-faulty processes must have a majority to guarantee interval validity. However, transient faults may be regarded as non-malicious; viewed as a change of a value to hold a value chosen according to a uniform distribution, their number only slightly contributes to the value chosen by the Byzantine processes.

\section{Byzantine and Transient Faults Resilient State Machine Replication}
\label{sec:btfr}

The following algorithm solves the state machine replication task in a system of \( n \) processes, where up to \( \lfloor n/3 \rfloor \) may be Byzantine and up to \( \lceil n/6 \rceil - 1 \) may experience recurring transient faults per synchronous round. The algorithm achieves this by reaching agreement on both the current state and the next input using repeated invocations of Algorithm~\ref{alg:baa}, the multi-valued Byzantine agreement with interval validity. Execution proceeds in iterations called \( \mathsf{Pulse} \)s, each consisting of several synchronous communication rounds. As mentioned in \Cref{sec:sys}, global pulses are provided by an external timing mechanism that defines the start of each round. We clarify that within each \( \mathsf{Pulse} \), several communication rounds are executed, and the external mechanism is responsible for coordinating the algorithm’s steps. Each $Pulse$ is triggered externally, and the algorithm waits for the next $Pulse$ to trigger before executing the next iteration of the main loop.

The initialization steps preceding the \texttt{Pulse} loop in~\Cref{alg:smr} are included for completeness, but they are not essential for ensuring self-stabilization. Rather, they help define the starting state when reasoning about eventual safety, in line with the discussion in~\cite{DBLP:conf/opodis/DelaetDP09}, which emphasizes the role of initial values in comparing the guarantees of self-stabilizing versus non-self-stabilizing algorithms.

\begin{algorithm}[H]
  \SetKwFunction{ByzAgr}{state\_machine\_replication}
    \Fn{\ByzAgr{input, s}}{
        \GlobalData{$M$ a state machine with states $Q=\{q_0 \dots q_m\}$, alphabet $\Sigma$, and transition function $\delta: Q \times \Sigma \rightarrow Q $}
        \tcc{Initialization for completeness, see \cite{DBLP:conf/opodis/DelaetDP09}}
        $input\_value \gets input$\;
        $current\_state \gets s$\;
        \tcc{Main loop, each iteration is externally triggered by a $Pulse$, the execution waits until the next $Pulse$}
        \Upon ($Pulse$) {
            \tcc{Agree on next input}
            $input\_value \gets \FuncSty{median\_byzantine\_agreement}(input\_value)$\;
            \tcc{Agree on current state}
            $current\_state \gets \FuncSty{median\_byzantine\_agreement}(current\_state)$\;
            \tcc{Apply state transition from state $current\_state$ with input $input\_value$ on machine $M$}
            Apply state $\delta(current\_state,input\_value)$ to $M$\;
        }
    }
    \caption{State Machine Replication protocol}
    \label{alg:smr}
\end{algorithm}
	

The following lemma establishes the correctness of~\Cref{alg:smr}. For simplicity, we assume that recurring transient faults may only occur at the beginning of each \texttt{Pulse}. That is, when a new instance of the Byzantine agreement protocol is invoked, each non-Byzantine process broadcasts either a correct value from its local state or a corrupted value due to a transient fault. Any other types of state deviation or faulty behavior are conservatively attributed to Byzantine processes.

\begin{lemma}
    \label{lemma:smrconst}
    Consider the execution of Algorithm \ref{alg:smr} between two successive pulses, $Pulse_i$ and $Pulse_{i+1}$, in the presence of less than $\lfloor n/3 \rfloor$ Byzantine processes and less than $\lceil n/6 \rceil - 1$ processes with corrupted state (due to recurrent transient fault), then all non-Byzantine processes reach the same state just before  $Pulse_{i+1}$. 
\end{lemma}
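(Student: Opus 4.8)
The plan is to reduce the lemma to two applications of the consistency guarantee already established for Algorithm~\ref{alg:baa} (Theorem~\ref{thm:baaft}), followed by an appeal to the determinism of the transition function $\delta$. Within the window between $Pulse_i$ and $Pulse_{i+1}$, each non-Byzantine process executes exactly two invocations of \texttt{median\_byzantine\_agreement}: one on its local $input\_value$ and one on its local $current\_state$. The fault assumptions in force---fewer than $\lfloor n/3 \rfloor$ Byzantine processes together with fewer than $\lceil n/6 \rceil - 1$ processes whose broadcast is corrupted by a recurring transient fault at the start of the pulse---are precisely the hypotheses under which Theorem~\ref{thm:baaft} guarantees consistency, so the strategy is to invoke it twice and then compose the results deterministically.

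First I would apply Theorem~\ref{thm:baaft} to the input-agreement call to conclude that all non-Byzantine processes assign the \emph{same} value to $input\_value$. Here Lemma~\ref{lemma:cbat} plays an essential role: a transiently corrupted process still broadcasts a single value identically to every recipient, so its (possibly corrupted) contribution enters each non-Byzantine process's local array in the same way and is absorbed inside Algorithm~\ref{alg:baa} as a worst-case but consistent input. Consequently the transient faults merely consume the tolerated margin $\alpha < \lceil n/6 \rceil - 1$ rather than breaking agreement, and the consistency conclusion of Theorem~\ref{thm:baaft} transfers verbatim.

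Next I would repeat the same argument for the state-agreement call, obtaining that all non-Byzantine processes hold an identical $current\_state$ immediately afterward. At this point every non-Byzantine process holds the same pair $(current\_state, input\_value)$. Because $M$ is a deterministic state machine (Definition~\ref{def:smr}), evaluating $\delta(current\_state, input\_value)$ returns one and the same successor for all of them; hence they all reach the identical state just before $Pulse_{i+1}$, which is exactly the claim. Note that interval validity is not needed for this lemma---only the consistency half of Theorem~\ref{thm:baaft}---since the goal is agreement on a common state, not a bound on its value.

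The main obstacle will be the bookkeeping of transient faults across the two sequential agreement calls within a single pulse. One must confirm that, under the simplifying model stated before the lemma (transient faults occur only at the beginning of each $Pulse$, and every other deviation is conservatively charged to the Byzantine count), the number of corrupted broadcasts affecting each individual invocation stays strictly below $\lceil n/6 \rceil - 1$, so that Theorem~\ref{thm:baaft} is legitimately applicable to both the input and state agreements. Provided the per-round fault budget of Section~\ref{sec:sys} is read per agreement instance, the two invocations are handled independently and the determinism step closes the argument without further computation.
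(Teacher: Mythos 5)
Your proof is correct and follows essentially the same route as the paper's: two applications of the consistency guarantee of Theorem~\ref{thm:baaft} (to the input agreement in line~5 and the state agreement in line~6), followed by the determinism of $\delta$. Your additional remarks on Lemma~\ref{lemma:cbat} and on the per-invocation accounting of transient faults are elaborations of points the paper absorbs into its simplifying assumption stated just before the lemma, not a different argument.
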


\begin{proof}
    Each participant $p_i$ starts the execution with some input $input_i$ and state $s_i$, which is its replica of the replicated state machine. In line 5, the participants execute Algorithm \ref{alg:baa} with $input_i$ as input; thus, by Theorem \ref{thm:baaft}, every non-faulty process has the same value in $input\_value$ after line 5. In line 6, every non-faulty process proposes a state; thus, by Theorem \ref{thm:baaft}, after line 6, every non-faulty participant has the same value in $current\_state$. 
    Due to the determinism of $\delta$, since both $input\_value$ and $current\_state$ are consistent across the non-faulty parties, $\delta(current\_state,input\_value)$ is the same for all non-faulty parties.
\end{proof}

\begin{lemma}
    Just before every pulse, $Pulse_i$, $i>1$, every non-faulty process executing Algorithm \ref{alg:smr} is in a state that is derived from a state of a non-faulty process just before $Pulse_{i-1}$ and the commonly decided input $input_{i-1}$ despite less than $\lfloor n/3 \rfloor$ Byzantine faults and less than $\lceil n/6 \rceil - 1$ transient faults.
\end{lemma}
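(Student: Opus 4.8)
The plan is to prove the statement by induction on $i$, combining the consistency already supplied by Lemma~\ref{lemma:smrconst} with a \emph{strong-validity} strengthening for the state agreement. First I would pin down the common state reached just before $Pulse_i$. By Lemma~\ref{lemma:smrconst} applied to the interval between $Pulse_{i-1}$ and $Pulse_i$, all non-faulty processes hold one and the same state at that point, and by the transition applied in line~7 of $Pulse_{i-1}$ this state equals $\delta(S,\,input_{i-1})$, where $S$ is the value agreed in line~6 and $input_{i-1}$ is the value agreed in line~5. That $input_{i-1}$ is a single, commonly decided input is immediate from the consistency of Algorithm~\ref{alg:baa} (Theorem~\ref{thm:baaft}). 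Hence the whole claim reduces to showing that $S$ is a state actually held by some non-faulty process just before $Pulse_{i-1}$.

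The heart of the argument is thus a strong-validity claim for the state agreement run during $Pulse_{i-1}$. Here I would invoke the inductive hypothesis (equivalently, Lemma~\ref{lemma:smrconst} one pulse earlier): just before $Pulse_{i-1}$ all non-faulty processes share a common state $s^*$. At the start of $Pulse_{i-1}$ a recurring transient fault may corrupt the value proposed by at most $r<\lceil n/6\rceil-1$ of them, and at most $f<\lfloor n/3\rfloor$ processes are Byzantine; the remaining $n-f-r$ honest, uncorrupted processes each broadcast $s^*$. By the weak validity of $\FuncSty{WeakMVBA}$ and the consistency reasoning of Lemma~\ref{lemma:baaconst}, every such index contributes an entry equal to $s^*$ to the common agreed vector $A$, so $s^*$ occurs at least $n-f-r$ times in $A$. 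I would then verify the threshold inequality $n-f-r \ge \lfloor k/3\rfloor+1+\alpha$ (using $k\le n$, $f<\lfloor n/3\rfloor$, $r<\lceil n/6\rceil-1$, $\alpha<\lceil n/6\rceil-1$), which makes $s^*$ the $\arg\max$ value $m$ and satisfies the test on line~8. Therefore $\FuncSty{select\_value}$ returns $m=s^*$ on line~9, so $S=s^*$, a state genuinely held by a non-faulty process before $Pulse_{i-1}$, and the common state before $Pulse_i$ is $\delta(s^*,\,input_{i-1})$, as required.

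I expect the main obstacle to be precisely this strong-validity step, because interval validity alone (Lemma~\ref{lemma:baaval}) does not suffice: the median branch of $\FuncSty{select\_value}$ can return a value that lies inside the honest interval yet coincides with a value injected by a Byzantine process, and such a value need not be ``derived from a state of a non-faulty process.'' The argument must force the most-common branch (line~9) to fire, which is exactly why the prior consistency of the non-faulty replicas---the $n-f-r$ honest, uncorrupted processes proposing the identical $s^*$---is indispensable, and why verifying the threshold inequality is the crux. The remaining delicate point is the base case $i=2$: Lemma~\ref{lemma:smrconst} does not establish agreement on the states held before $Pulse_1$, so I would treat $i=2$ separately, either assuming the initialization of Algorithm~\ref{alg:smr} provides consistent initial states or reading the claim relative to the (possibly inconsistent) non-faulty states that precede $Pulse_1$, and let the inductive step cover all $i\ge 3$. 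The floor/ceiling arithmetic in the threshold inequality for small $n$ is then the only routine calculation that remains.
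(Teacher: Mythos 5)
Your proposal is correct and takes essentially the same route as the paper: both rely on Lemma~\ref{lemma:smrconst} to obtain a common state $s$ held by all non-faulty processes before the previous pulse, count that the uncorrupted non-faulty processes (more than $\lfloor n/2 \rfloor$, i.e., your $n-f-r$) all propose $s$ in the state-agreement step, and conclude that this count exceeds the line-8 threshold $\lfloor k/3\rfloor+1+\alpha$, forcing $\FuncSty{select\_value}$ to return $s$ on line~9, so the next state is $\delta(s, input)$ with $s$ genuinely a non-faulty state. The only differences are that your induction wrapper is not needed (Lemma~\ref{lemma:smrconst} yields consistency in every inter-pulse interval unconditionally, which is how the paper argues directly), and your flagged base-case worry for $i=2$ is a genuine subtlety that the paper's proof also leaves implicit---its subsequent theorem claims the property only ``from the second replicated state and onwards.''
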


\begin{proof}
    By Lemma \ref{lemma:smrconst}, after execution of $Pulse_i$, every non-faulty process is in the same state, denoted as $s$.
    Due to recurring transient faults, some of the non-faulty processes may change to a state different from $s$. 
    Let $r$ be the number of non-faulty processes that experience a transient fault after the execution of $Pulse_i$.
    In line 6, all non-faulty processes propose state $s$ while the $r$ transient faulty and Byzantine processes propose some other state.
    There are at most $\lceil n/6 \rceil - 1$, therefore $r < \lceil n/6 \rceil-1$.
    At least $\lfloor n/2 \rfloor + 1$ non-faulty processes, thus at least $\lfloor n/2 \rfloor + 1$ processes propose the value $s$ for the agreement step in line 6.
    Therefore, in line 7 of Algorithm \ref{alg:baa}, $m$ is chosen to be $s$ in all non-faulty processes.
    Due to the majority condition in line 8, $m$ is returned in line 9 (both lines of Algorithm \ref{alg:baa}).
    Thus, after line 6, it holds that $current\_state=s$ for every non-faulty process.  
    Theorem \ref{thm:baaft} shows that $input\_value$ is consistent across the non-faulty process.
    Therefore, every non-faulty process will apply the state $\delta(s,input\_value)$ at the end of the execution of $Pulse_{i+1}$ (line 7).
\end{proof}

We conclude the series of proofs with the following theorem that declares a constant stabilization time, too. 

\begin{theorem}
In the presence of $\lceil n/3 \rceil-1$ Byzantine participants and $\lceil n/6 \rceil -1$ recurrent transient faults, Algorithm \ref{alg:smr} satisfies the agreement and order requirements, satisfying strong validity on the state and interval validity on the inputs, from the second replicated state and onwards. 
\end{theorem}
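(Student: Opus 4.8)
The plan is to decompose the statement into its four assertions---agreement, order, strong validity on the replicated state, and interval validity on the inputs---and to derive each from the already-established building blocks, namely Lemma~\ref{lemma:smrconst}, the state-derivation lemma that follows it, and Theorem~\ref{thm:baaft}. The organizing idea is that a single $\mathsf{Pulse}$ suffices to bring all non-faulty replicas into a common state, after which every later $\mathsf{Pulse}$ merely reproduces and advances that common state; this is precisely why the guarantees are claimed only \emph{from the second replicated state onward}, and it simultaneously certifies a constant (one-$\mathsf{Pulse}$) stabilization time.

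First I would dispatch agreement and order. By Theorem~\ref{thm:baaft}, each of the two invocations of \FuncSty{median\_byzantine\_agreement} in lines~5--6 of Algorithm~\ref{alg:smr} is consistent, so all non-faulty replicas hold identical values of $input\_value$ and $current\_state$ at each pulse. Since $\delta$ is deterministic (as used in Lemma~\ref{lemma:smrconst}), every non-faulty replica applies the same transition $\delta(current\_state, input\_value)$. Because the processes advance through the externally triggered $\mathsf{Pulse}$s in synchronous lockstep, the sequence of agreed inputs forms the same command stream for every non-faulty replica and is consumed in the same order; this yields the \textbf{Agreement} and \textbf{Order} requirements of Definition~\ref{def:smr} directly.

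Next I would establish the two validity claims, which is where the ``second state'' qualification becomes essential. For the \emph{inputs}, non-faulty processes may legitimately hold distinct values (e.g.\ different price feeds), so only interval validity is attainable; it follows verbatim from Theorem~\ref{thm:baaft}. For the \emph{state}, the argument is bootstrapped: in the very first $\mathsf{Pulse}$ the replicas begin from an arbitrary, possibly transiently corrupted configuration in which their states need not coincide, and Algorithm~\ref{alg:baa} can then guarantee only interval validity on the state---hence no strong-validity claim is made for the first replicated state. However, Lemma~\ref{lemma:smrconst} shows that after this first $\mathsf{Pulse}$ all non-faulty replicas share a single state $s$. In every subsequent $\mathsf{Pulse}$ the state-derivation lemma applies: at least $\lfloor n/2 \rfloor + 1$ non-faulty replicas propose $s$ (at most $\lceil n/6\rceil - 1$ are transiently corrupted and fewer than $\lfloor n/3\rfloor$ are Byzantine), which forces the most-common value $m = s$ at line~7 of Algorithm~\ref{alg:baa} and crosses the threshold $\lfloor k/3\rfloor + 1 + \alpha$ at line~8, so $s$ itself is returned. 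Since $s$ is a non-faulty process's state, this delivers \textbf{Strong Validity} on the state (Definition~\ref{def:agreement}) for the second replicated state and all that follow, while interval validity on the inputs holds throughout by Definition~\ref{def:interval-validity}.

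I expect the main obstacle to be verifying, in the worst case, that the most-common branch (line~9 of Algorithm~\ref{alg:baa}) rather than the median branch (line~11) is taken, since strong validity---as opposed to mere interval validity---depends on returning exactly $s$. This requires confirming that the $\lfloor n/2\rfloor + 1$ majority of $s$-proposals strictly dominates the combined adversarial mass of $\lfloor n/3\rfloor$ Byzantine plus $\lceil n/6\rceil - 1$ transient corruptions relative to the threshold computed on the $\bot$-filtered length $k \le n$, a comparison that is tight near $\lfloor n/3\rfloor + \lceil n/6\rceil \approx n/2$ and whose correctness rests squarely on the fault-bound assumptions. That counting is already carried out inside the state-derivation lemma, so the remaining work is chiefly to knit the one-$\mathsf{Pulse}$ stabilization argument together with it, making explicit the transition from interval validity at the first state to strong validity at every state thereafter and thereby the constant stabilization time.
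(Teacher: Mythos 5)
Your proposal is correct and takes essentially the same route as the paper: the paper states this theorem as a direct consequence of Lemma~\ref{lemma:smrconst}, the state-derivation lemma that follows it (whose counting argument---a majority of at least $\lfloor n/2\rfloor+1$ proposals of the common state $s$ forcing the most-common branch of Algorithm~\ref{alg:baa}), and Theorem~\ref{thm:baaft} for interval validity on the inputs. Your decomposition into agreement, order, strong validity on the state (from the second state onward, via one-$\mathsf{Pulse}$ stabilization), and interval validity on the inputs is exactly the knitting-together the paper intends.
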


\noindent
{\bf Benign transient faults.}
We now discuss benign, recurring, transient faults, where the outcome of the recurring transient fault is a value chosen randomly and uniformly across all possible values. The discussion regards the Byzantine agreement on the state, as the Byzantine agreement on the inputs may not be subject to recurring transient faults. Recall that the Byzantine agreement on the inputs yields interval validity using only the median criteria. 

Let $x$ be the total number of recurring transient faults in a pulse $Pulse$.
Let $y$ be the most popular value among the $z$ possible values that can be obtained due to recurrent transient faults. Let $w$ be the number of processes with value $y$. Since the recurrent transient faults are randomly chosen, $w$ is a function of $x$ and $z$. Since the $\lceil n/3 \rceil-1$ Byzantine participants can benefit by joining a value $y$ to compete with the $\lfloor 2n/3 \rfloor+1-x$ non-Byzantine that do not experience recurrent transient faults in $Pulse$, the strong validity holds when $\lceil n/3 \rceil-1+w < \lfloor n/3 \rfloor +1+ \alpha < \lfloor 2n/3 \rfloor+1-x$.

In other words, in a given $Pulse$, the $\lceil n/3 \rceil -1$  Byzantine processes may choose to join the most popular value resulting from the randomly and uniformly chosen values of recurrent transient faults. 
When the number of recurrent transient faults is slightly less than $\lceil n/3 \rceil$, (leading to a total of almost two-thirds of the processes being faulty) and the possible number of possible values of the recurrent transient faults outcomes, $z$, is large (say, exponentially larger than the number of processes) then the value chosen by the non-faulty processes has a high probability of staying the most popular value. Schemes mentioned in the related work cannot support strong validity under such severe settings. In particular, any scheme based on the mean may decide on a value (at least slightly) different from the noncorrupted state of a non-Byzantine participant.

Thus, we obtain the following theorem:

\begin{theorem}
When the recurring transient faults are uniformly distributed, there are $\alpha$ values that preserve strong validity on the state with high probability even when the number of recurring transient faults exceeds $\lceil n/6 \rceil$.
\end{theorem}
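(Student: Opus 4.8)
The plan is to read $w$ as the maximum load of a balls-into-bins experiment and then combine a concentration bound on $w$ with the deterministic strong-validity condition already isolated in the discussion preceding the statement. Concretely, the $x$ recurring transient faults in a given $\mathsf{Pulse}$ choose their values independently and uniformly among the $z$ admissible values, so $w$ (the multiplicity of the most popular faulty value) is exactly the maximum number of balls in any bin when $x$ balls are thrown into $z$ bins. Recall that strong validity on the state holds whenever
\[
\lceil n/3 \rceil - 1 + w < \lfloor n/3 \rfloor + 1 + \alpha < \lfloor 2n/3 \rfloor + 1 - x .
\]
The right inequality is purely an admissibility constraint on $x$, equivalent to $\alpha + x < \lfloor 2n/3 \rfloor - \lfloor n/3 \rfloor \approx n/3$, which still permits $x > \lceil n/6 \rceil$ as long as $\alpha$ stays below roughly $n/6$. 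The left inequality reduces over the integers to $w \le \alpha + (\lfloor n/3 \rfloor - \lceil n/3 \rceil) + 1$, i.e.\ essentially $w \le \alpha$; hence it suffices to exhibit a value of $\alpha$ for which the event $\{w \le \alpha\}$ occurs with high probability while the admissibility constraint remains satisfiable.

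First I would bound the tail of $w$ by a union bound over the $z$ values together with a binomial estimate: for a fixed value, the probability that at least $\alpha + 1$ of the $x$ faults select it is at most $\binom{x}{\alpha + 1} z^{-(\alpha + 1)}$, so
\[
\Pr[\,w \ge \alpha + 1\,] \;\le\; z \binom{x}{\alpha + 1} z^{-(\alpha + 1)} \;=\; \binom{x}{\alpha + 1} z^{-\alpha} .
\]
This quantity is controlled entirely by the ratio $x / z$; choosing $\alpha$ to be the smallest integer making the right-hand side smaller than the target failure probability yields exactly the ``safety-factor'' parameter referred to in the statement, and then $\{w \le \alpha\}$ holds with high probability, which by the reduction above secures the left inequality.

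Second I would specialize to the regime emphasized in the discussion, where $z$ is exponentially larger than $n$ (hence than $x \le n$). Here a birthday-type estimate $\Pr[\,w \ge 2\,] \le \binom{x}{2}/z \le n^2 / z$ is exponentially small, so even $\alpha = 1$ forces all faulty values to be distinct with high probability; the admissibility constraint then relaxes to $x < \lfloor 2n/3 \rfloor - \lfloor n/3 \rfloor - 1 \approx n/3$, comfortably exceeding $\lceil n/6 \rceil$. This establishes the existence of (small) values of $\alpha$ that preserve strong validity with high probability despite $x > \lceil n/6 \rceil$.

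I expect the main obstacle to be calibrating $\alpha$ to meet two competing demands simultaneously: $\alpha$ must be large enough to dominate the maximum load $w$ with a genuinely small failure probability, yet small enough to keep the admissibility constraint $\alpha + x < \lfloor 2n/3 \rfloor - \lfloor n/3 \rfloor$ compatible with $x > \lceil n/6 \rceil$. Making this trade-off precise requires stating ``high probability'' quantitatively in terms of $n$, $x$, and $z$ (for instance, probability $1 - o(1)$ as $z \to \infty$ for a fixed fault fraction), so the quantitative heart of the argument is the concentration of the maximum bin load rather than any property specific to the agreement protocol.
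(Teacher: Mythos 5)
Your proposal is correct and follows essentially the same route as the paper: the paper's own justification is precisely the informal discussion preceding the theorem (the chain $\lceil n/3 \rceil-1+w < \lfloor n/3 \rfloor +1+ \alpha < \lfloor 2n/3 \rfloor+1-x$ together with the observation that for $z$ exponentially larger than $n$ the most popular faulty value has small multiplicity with high probability), which you reproduce faithfully. Your contribution beyond the paper is to make the probabilistic step quantitative --- the balls-into-bins reading of $w$, the union bound $\Pr[w \ge \alpha+1] \le \binom{x}{\alpha+1} z^{-\alpha}$, and the birthday estimate $\binom{x}{2}/z$ justifying $\alpha=1$ in the large-$z$ regime --- which is exactly the content the paper leaves implicit, so your write-up is a strictly more rigorous rendering of the same argument.
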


\section{Conclusions and Discussions}
\label{sec:conc}
We investigated self-stabilizing Byzantine-tolerant algorithms that cope with arbitrary initial configurations (due to the occurrence of any number of transient faults) and Byzantine behaviours, converging to a desired behaviour despite Byzantine and recurrent transient faults. A replicated state machine that withstands such a combination of faults and maintains a strong validity property for the state component under such severe conditions is presented. 
Interestingly, in order to verify the strong properties of self-stabilizing Byzantine-tolerant state machine replication, our implementation is constructed on top of a self-stabilizing Byzantine-tolerant agreement protocol that only needs to guarantee discrete interval validity (a weak form of validity).
Our replicated state machine transition benefits from strong validity for the state component and nondefault interval validity for the outside inputs. 
Still, when the inputs of many non-Byzantine participants are identical, a reasonable assumption in many scenarios, strong validity is also guaranteed for the outside inputs.

Practical and timely applications include Blockchain Oracle‘s services that report stock prices or currency exchange rates. Current practice is ad hoc, mostly using a combination of meetings and voting. Our infrastructure enables new such services with provable systematic guarantees.  

 Further note that the assumption of a global clock pulse may be relaxed using versions of self-stabilizing Byzantine clock synchronization algorithms, \eg \cite{DBLP:journals/jacm/DolevLPSW86, DBLP:journals/jacm/LenzenR19} that withstand recurring transient faults. As the protocol we employ \cite{TURPIN198473} uses two rounds to complete the consensus, we may start the activity of \cite{TURPIN198473} in every odd clock value. For other synchronous consensus, a similar modulo approach can be used. 

\section*{Acknowledgments}
This work was supported by the Google Research Grant, the Rita Altura Trust Chair in Computer Science, the BGU Data Center, the Frankel Center for Computer Science, and the Israeli Science Foundation (Grant No. 465/22).
This work was also supported by Vinnova (Strategic Vehicle Research and Innovation, FFI) under project 2024‑03687, ‘MAGIC: Meeting Automotive Liability Challenges through Forensics Soundness’.


\begin{thebibliography}{10}

\bibitem{DBLP:journals/jpdc/AfekD02}
Yehuda Afek and Shlomi Dolev.
\newblock Local stabilizer.
\newblock {\em J. Parallel Distributed Comput.}, 62(5):745--765, 2002.

\bibitem{pyth}
Pith~Data Association.
\newblock Pyth network: A first-party financial oracle.
\newblock \url{chrome-extension://efaidnbmnnnibpcajpcglclefindmkaj/https://www.pyth.network/whitepaper_v2.pdf?ref=pyth-network.ghost.io}, September 2023.
\newblock As of 21 May 2023.

\bibitem{DBLP:books/daglib/0017536}
Hagit Attiya and Jennifer~L. Welch.
\newblock {\em Distributed computing - fundamentals, simulations, and advanced topics {(2.} ed.)}.
\newblock Wiley series on parallel and distributed computing. Wiley, 2004.

\bibitem{DBLP:journals/iandc/BazziH22}
Rida~A. Bazzi and Maurice Herlihy.
\newblock Clairvoyant state machine replication.
\newblock {\em Inf. Comput.}, 285(Part):104701, 2022.

\bibitem{DBLP:conf/stoc/Ben-OrGW88}
Michael Ben{-}Or, Shafi Goldwasser, and Avi Wigderson.
\newblock Completeness theorems for non-cryptographic fault-tolerant distributed computation (extended abstract).
\newblock In Janos Simon, editor, {\em Proceedings of the 20th Annual {ACM} Symposium on Theory of Computing, May 2-4, 1988, Chicago, Illinois, {USA}}, pages 1--10. {ACM}, 1988.

\bibitem{DBLP:conf/sss/BinunCDKLPYY16}
Alexander Binun, Thierry Coupaye, Shlomi Dolev, Mohammed Kassi{-}Lahlou, Marc Lacoste, Alex Palesandro, Reuven Yagel, and Leonid Yankulin.
\newblock Self-stabilizing {Byzantine}-tolerant distributed replicated state machine.
\newblock In {\em Stabilization, Safety, and Security of Distributed Systems - 18th International Symposium, 2016, November 7-10, 2016, Proceedings}, volume 10083 of {\em LNCS}, pages 36--53, 2016.

\bibitem{chainlink}
Lorenz Breidenbach, Alex Coventry, Christian Cachin, Steve Ellis, Andrew Miller, Ari Juels, Brendan Magauran, Sergey Nazarov, Alexandru Topliceanu, and Fan Zhang.
\newblock Chainlink 2.0 and the future of decentralized oracle networks.
\newblock \url{https://chain.link/whitepaper}, April 2021.
\newblock As of 21 May 2025.

\bibitem{DBLP:journals/tdsc/BrownsteinDK22}
Dan Brownstein, Shlomi Dolev, and Muni~Venkateswarlu Kumaramangalam.
\newblock Self-stabilizing secure computation.
\newblock {\em {IEEE} Trans. Dependable Secur. Comput.}, 19(1):33--38, 2022.

\bibitem{DBLP:conf/opodis/DelaetDP09}
Sylvie Dela{\"{e}}t, Shlomi Dolev, and Olivier Peres.
\newblock Safe and eventually safe: Comparing self-stabilizing and non-stabilizing algorithms on a common ground.
\newblock In Tarek~F. Abdelzaher, Michel Raynal, and Nicola Santoro, editors, {\em Principles of Distributed Systems, 13th International Conference, {OPODIS} 2009, N{\^{\i}}mes, France, December 15-18, 2009. Proceedings}, volume 5923 of {\em Lecture Notes in Computer Science}, pages 315--329. Springer, 2009.

\bibitem{DBLP:journals/cacm/Dijkstra74}
Edsger~W. Dijkstra.
\newblock Self-stabilizing systems in spite of distributed control.
\newblock {\em Commun. {ACM}}, 17(11):643--644, 1974.

\bibitem{DBLP:journals/jacm/DolevLPSW86}
Danny Dolev, Nancy~A. Lynch, Shlomit~S. Pinter, Eugene~W. Stark, and William~E. Weihl.
\newblock Reaching approximate agreement in the presence of faults.
\newblock {\em J. {ACM}}, 33(3):499--516, 1986.

\bibitem{DBLP:books/mit/Dolev2000}
Shlomi Dolev.
\newblock {\em Self-Stabilization}.
\newblock {MIT} Press, 2000.

\bibitem{dolev2018self}
Shlomi Dolev, Chryssis Georgiou, Ioannis Marcoullis, and Elad~M. Schiller.
\newblock Self-stabilizing {Byzantine} tolerant replicated state machine based on failure detectors.
\newblock In {\em Stabilization, Safety, and Security of Distributed Systems (SSS 2018)}, volume 11201 of {\em Lecture Notes in Computer Science}, pages 92--107. Springer, 2018.

\bibitem{DBLP:conf/cscml/DolevGMS18}
Shlomi Dolev, Chryssis Georgiou, Ioannis Marcoullis, and Elad~Michael Schiller.
\newblock Self-stabilizing {Byzantine} tolerant replicated state machine based on failure detectors.
\newblock In {\em Cyber Security Cryptography and Machine Learning - Second International Symposium, {CSCML} 2018, Beer Sheva, Israel, June 21-22, 2018, Proceedings}, volume 10879 of {\em Lecture Notes in Computer Science}, pages 84--100. Springer, 2018.

\bibitem{DBLP:journals/cjtcs/DolevH97}
Shlomi Dolev and Ted Herman.
\newblock Superstabilizing protocols for dynamic distributed systems.
\newblock {\em Chic. J. Theor. Comput. Sci.}, 1997, 1997.

\bibitem{DBLP:journals/iandc/DolevL22}
Shlomi Dolev and Matan Liber.
\newblock Towards self-stabilizing blockchain, reconstructing totally erased blockchain.
\newblock {\em Inf. Comput.}, 285(Part):104881, 2022.

\bibitem{DBLP:journals/jacm/DolevW04}
Shlomi Dolev and Jennifer~L. Welch.
\newblock Self-stabilizing clock synchronization in the presence of {Byzantine} faults.
\newblock {\em J. {ACM}}, 51(5):780--799, 2004.

\bibitem{duvignau2023self}
Romaric Duvignau, Michel Raynal, and Elad~Michael Schiller.
\newblock Self-stabilizing {Byzantine} fault-tolerant repeated reliable broadcast.
\newblock {\em Theoretical Computer Science}, 967:113--130, 2023.

\bibitem{DBLP:journals/tcs/DuvignauRS25}
Romaric Duvignau, Michel Raynal, and Elad~Michael Schiller.
\newblock Self-stabilizing multivalued consensus in the presence of {Byzantine} faults and asynchrony.
\newblock {\em Theor. Comput. Sci.}, 1039:115184, 2025.

\bibitem{DBLP:conf/netys/GeorgiouMRS21}
Chryssis Georgiou, Ioannis Marcoullis, Michel Raynal, and Elad~Michael Schiller.
\newblock Loosely-self-stabilizing {Byzantine}-tolerant binary consensus for signature-free message-passing systems.
\newblock In Karima Echihabi and Roland Meyer, editors, {\em Networked Systems - 9th International Conference, {NETYS} 2021, Virtual Event, May 19-21, 2021, Proceedings}, volume 12754 of {\em Lecture Notes in Computer Science}, pages 36--53. Springer, 2021.

\bibitem{DBLP:conf/sss/GeorgiouRS23}
Chryssis Georgiou, Michel Raynal, and Elad~Michael Schiller.
\newblock Self-stabilizing {Byzantine}-tolerant recycling.
\newblock In Shlomi Dolev and Baruch Schieber, editors, {\em Stabilization, Safety, and Security of Distributed Systems - 25th International Symposium, {SSS} 2023, Jersey City, NJ, USA, October 2-4, 2023, Proceedings}, volume 14310 of {\em Lecture Notes in Computer Science}, pages 518--535. Springer, 2023.

\bibitem{DBLP:journals/dc/GhoshGHP07}
Sukumar Ghosh, Arobinda Gupta, Ted Herman, and Sriram~V. Pemmaraju.
\newblock Fault-containing self-stabilizing distributed protocols.
\newblock {\em Distributed Comput.}, 20(1):53--73, 2007.

\bibitem{DBLP:books/mk/Herlihy2013}
Maurice Herlihy, Dmitry~N. Kozlov, and Sergio Rajsbaum.
\newblock {\em Distributed Computing Through Combinatorial Topology}.
\newblock Morgan Kaufmann, 2013.

\bibitem{DBLP:journals/tocs/JhaBGMSTRZB19}
Sagar Jha, Jonathan Behrens, Theo Gkountouvas, Mae Milano, Weijia Song, Edward Tremel, Robbert van Renesse, Sydney Zink, and Kenneth~P. Birman.
\newblock Derecho: Fast state machine replication for cloud services.
\newblock {\em {ACM} Trans. Comput. Syst.}, 36(2):4:1--4:49, 2018.

\bibitem{DBLP:journals/cacm/Lamport78}
Leslie Lamport.
\newblock Time, clocks, and the ordering of events in a distributed system.
\newblock {\em Commun. {ACM}}, 21(7):558--565, 1978.

\bibitem{DBLP:journals/jacm/LenzenR19}
Christoph Lenzen and Joel Rybicki.
\newblock Self-stabilising byzantine clock synchronisation is almost as easy as consensus.
\newblock {\em J. {ACM}}, 66(5):32:1--32:56, 2019.

\bibitem{DBLP:books/mk/Lynch96}
Nancy~A. Lynch.
\newblock {\em Distributed Algorithms}.
\newblock Morgan Kaufmann, 1996.

\bibitem{DBLP:conf/srds/MelnykW18}
Darya Melnyk and Roger Wattenhofer.
\newblock {Byzantine} agreement with interval validity.
\newblock In {\em 37th {IEEE} Symposium on Reliable Distributed Systems, {SRDS} 2018, Salvador, Brazil, October 2-5, 2018}, pages 251--260. {IEEE} Computer Society, 2018.

\bibitem{nakamoto2008bitcoin}
Satoshi Nakamoto.
\newblock Bitcoin: A peer-to-peer electronic cash system.
\newblock White paper, 2008.

\bibitem{oraclesurvey}
Amirmohammad Pasdar, Young~Choon Lee, and Zhongli Dong.
\newblock Connect api with blockchain: A survey on blockchain oracle implementation.
\newblock {\em ACM Comput. Surv.}, 55(10), February 2023.

\bibitem{DBLP:journals/jacm/PeaseSL80}
Marshall~C. Pease, Robert~E. Shostak, and Leslie Lamport.
\newblock Reaching agreement in the presence of faults.
\newblock {\em J. {ACM}}, 27(2):228--234, 1980.

\bibitem{DBLP:books/sp/Raynal18}
Michel Raynal.
\newblock {\em Fault-Tolerant Message-Passing Distributed Systems - An Algorithmic Approach}.
\newblock Springer, 2018.

\bibitem{DBLP:journals/csur/Schneider90}
Fred~B. Schneider.
\newblock Implementing fault-tolerant services using the state machine approach: {A} tutorial.
\newblock {\em {ACM} Comput. Surv.}, 22(4):299--319, 1990.

\bibitem{10.1145/98163.98167}
Fred~B. Schneider.
\newblock Implementing fault-tolerant services using the state machine approach: a tutorial.
\newblock {\em ACM Comput. Surv.}, 22(4):299–319, December 1990.

\bibitem{DBLP:conf/opodis/StolzW15}
David Stolz and Roger Wattenhofer.
\newblock {Byzantine} agreement with median validity.
\newblock In Emmanuelle Anceaume, Christian Cachin, and Maria~Gradinariu Potop{-}Butucaru, editors, {\em 19th International Conference on Principles of Distributed Systems, {OPODIS} 2015, December 14-17, 2015, Rennes, France}, volume~46 of {\em LIPIcs}, pages 22:1--22:14. Schloss Dagstuhl - Leibniz-Zentrum f{\"{u}}r Informatik, 2015.

\bibitem{TURPIN198473}
Russell Turpin and Brian~A. Coan.
\newblock Extending binary {Byzantine} agreement to multivalued {Byzantine} agreement.
\newblock {\em Information Processing Letters}, 18(2):73--76, 1984.

\end{thebibliography}

\newpage

\appendix

\section{Appendix}
\subsection{Byzantine Agreement protocol of Turpin and Coan \cite{TURPIN198473}}

Below is the exact verbatim definition of the protocol in \cite{TURPIN198473}.

\begin{quote}
    Each process sends its initial value to every other process in the first round. A process is said to be perplexed if, in the first round, it receives at least as many $(P-T)/2$ initial values different from its own. Processes that are not perplexed are said to be content. Each perplexed process sends a message to every other process in the second round. The semantics of this message are just "I am perplexed".\\

    Each process maintains three local variables: two arrays indexed by the process number and a boolean. These variables are assigned values during the first two rounds. For process $j$, and $i\neq j$, these variables are defined as follows: \\

    \begin{itemize}
        \item[$v(j)$] The processe's initial value.
        \item[$v(i)$] The initial value received from process $i$
        \item[$p(j)$] A boolean that is set true if and only if process $j$ is perplexed, that is, $v(j) \neq v(i)$ for at least as many as $(P-T)/2$ distinct values of $i$
        \item[$p(i)$] A boolean that is set true if and only if process $i$ sent a message claiming it is perplexed
        \item[alert] A boolean that is set true if and only if at least as many as $P-2T$ elements of $p$ are true 
    \end{itemize}

    The binary computation is used to reach an agreement on alert. If the binary computation agrees alert = true, there are correct processes with different initial values from $V$. In this case, all correct processes use a predefined default value from $V$ as the result of the extended computation. If agreement is alert = false, then all correct content processes have the same initial value from $V$. This value is the result of the extended computation. Perplexed processes deduce this result by using the initial value that is common to a majority of the content processes. Each perplexed process tabulates as votes the values $v(j)$ for which $p(j)$ is false. The majority vote is for the value favored by the correct content processes.
\end{quote}

Next, we present a pseudocode format for the algorithm of \cite{TURPIN198473}.
    
\begin{algorithm}[H]
     \SetKwFunction{TurpinCoan}{turpin\_coan\_ext}
     \Fn{\TurpinCoan{j, $v \in V$}}{
         \KwData{$P[1 \ldots n]$ array of addresses of all the parties}
         \tcc{Allocate an empty array to hold the initial values of all parties}
         $A[1 \ldots n] \gets \bot$\;
         \tcc{Counter to check perplexity}
         $c \gets 0$\;
         \tcc{Counter to check alertness}
         $d \gets 0$\;
         \tcc{Broadcast my initial value to all parties}
         \ForEach{$i \in [1 \ldots  n]$} {
             send $v$ to $P[i]$\;
         }
         \tcc{Save the initial values received from other parties}
         \ForEach{value $u_i$ received from party $p_i$} {
             $A[i] \gets u_i$\;
         }
         \tcc{Check if current party is perplexed}
         \ForEach{$i \in [1 \ldots  n]$} {
             \If{$A[i] \neq v \wedge i\neq j$} {
                 $c \gets c + 1$\;
                 \tcc{If more than $\frac{n-f}{2}$ sent values are different from the current party, then the current party is perplexed}
                 \If{$c \geq \frac{n-f}{2} $} {
                     \tcc{Broadcast the fact that the current party is perplexed to all other parties}
                     \ForEach{$i \in [1 \ldots  n]$} {
                         send message "$p_j$ is perplexed" to $P[i]$\;
                     }
                     \tcc{Stop checking for perplexity}
                     break\;
                 }
             }
         }
         \tcc{Save the perplexity flags received from other parties}
         \ForEach{message "$p_i$ is perplexed" received from party $p_i$} {
             $d \gets d + 1$\;
             \tcc{If more than $n-2t$ parties are perplexed, then turn on the alert flag\cite{TURPIN198473}, that is, return $\bot$}
             \If{$d > n - 2f$} {
                 return $\bot$\;
             }
         }

        alert $\gets \FuncSty{binary\_agreement}(alert)$\;
         
         \tcc{If the system is not alert, return the majority value from the received values}
         return the most common value in $A$\;
     }
     \caption{Byzantine Agreement protocol of Turpin and Coan \cite{TURPIN198473}}
     \label{alg:tce}
 \end{algorithm}

\end{document}